\theoremstyle{plain} 
\newtheorem{thm}{Theorem}[section]
\theoremstyle{definition} 
\theoremstyle{remark} 
\newtheorem{oss}{Remark} 
\begin{document}

\title{Viral epidemiology of the adult \textit{Apis Mellifera} infested by the
\textit{Varroa destructor} mite
}

\author{Sara Bernardi, Ezio Venturino\thanks{
Corresponding author's email: ezio.venturino@unito.it}\\
Dipartimento di Matematica ``Giuseppe Peano'',\\
Universit\`{a} di Torino, Italy.
%\\
%email: ezio.venturino@unito.it
}
\maketitle

\begin{abstract}

The ectoparasitic mite {\textit{Varroa destructor}} has become one of the major worldwide
threats for apiculture.

{\textit{Varroa destructor}} attacks the honey bee {\textit{Apis mellifera}} weakening its
host by sucking hemolymph. However, the damage to bee colonies is not strictly related to
the parasitic action of the mite but it derives, above all, from its action as vector
increasing the trasmission
of many viral diseases
%. These diseases, 
such as acute paralysis (ABPV)
and deformed wing viruses (DWV), that are considered among the main causes of CCD
(Colony Collapse Disorder).

In this work we discuss an SI model that describes how the presence of the mite affects
the epidemiology of these viruses on adult bees.
We characterize the system behavior, establishing that ultimately either only healthy
bees survive, or the disease becomes endemic and mites are wiped out. 
%If the horizontal transmission rate of the virus is sufficiently high, the infection
%affects all the honey bees. 
Another dangerous alternative is the \textit{Varroa} invasion scenario with the extinction
of healthy bees. The final possible configuration is
the coexistence equilibrium in which honey bees share their infected hive with mites. 

The analysis is in line with some observed facts in natural honey bee colonies. Namely,
these diseases are endemic. Further, if the mite population is present, necessarily
the viral infection occurs.

The findings of this study indicate that a low horizontal transmission rate of the virus
among honey bees in beehives will help in protecting bee colonies from
\textit{Varroa} infestation and viral epidemics.
\end{abstract}

\section{Introduction}

Pathogens, e.g. viruses, fungi and bacteria, significantly affect the dynamics of invertebrates.
Experimental evidence reveals that some naturally harmless honey bee viruses turn
instead epidemic via new transmission routes \cite{SM}.
Specifically, when the parasitic mite \textit{Varroa jacobsoni}, that normally
affects the eastern honey bee \textit{Apis cerana}, switched host and
started to attack the western honey bee \textit{Apis mellifera},
causing larger outbreaks of mites, in particular of
\textit{Varroa destructor}, in \textit{A. mellifera} than in \textit{A. cerana}. 
These mite populations act as bee viruses vectors, that are presumed to be the cause
of the worldwide loss of mite-infested honey bee colonies, because they induce
a much faster spread of the viruses than before.

In order to better understand the \textit{Varroa}-infested colony response to these
viruses, it is necessary to study the effects of  mites and viral diseases together,
since both pathogens appear simultaneously in fields conditions.  

The main goal of this paper is to formulate and investigate
a mathematical model able to describe how the mite presence affects
the epidemiology of these viruses on adult bees. 
To this end, we take into consideration all the possible transmission routes
for honey bee viral diseases, including the vectorial one through parasitizing mites.

The paper is organized as follows. 
In the next Section we present the biological background,
briefly describing the host \textit{Apis Mellifera},
the vector \textit{Varroa destructor}, the honey bee viral diseases
and the triangular relationship between them. 

In Section \ref{sec:model} the model is formulated, assessing some basic properties,
like well-posedness. Section \ref{sec:equil} investigates its equilibria for
feasibility and the following Section \ref{sec:stab} for the stability analysis.
In Section \ref{sec:sim} we provide a set of reasonable values for the parameters,
those that can be obtained from field data, and perform numerical simulations interpreting
their results. Bifurcations are investigated in Section \ref{sec:bif}, connecting
all the equilibria together and further showing that they completely capture the system's
behavior, as persistent oscillations among the populations are proven to be impossible.
Sensitivity analysis is carried out in the following Section \ref{sec:sens}.
A final discussion concludes the paper.

\section{Biological background}

\subsection{The host: the honey bee \textit{Apis mellifera}}

In the summer season a honey bee colony harbors generally a single reproductive queen,
living on average  $2$ to $3$ years,
with about $60000$ adult female worker bees,
$10000$ to $30000$ individuals at the
brood stage (egg, larvae and pupae) and a few hundreds of male drones, \cite{Ratti}.
This large population of workers takes care of the colony by foraging, producing
honey, caring for and rearing the brood and the bees new generation.
The queen lays fertilized eggs producing worker bees and, rather seldom, other queens,
and non-fertilized eggs from which drones are born.
During the winter the queen and around $8000$ to $15000$ adult workers
thrive feeding only on the honey produced and stored in the previous summer.
Because of the population density experienced in such an environment, the honey
bees can very easily be affected by pathogens, \cite{parametri}.

Honey bees have many defences against diseases: for instance, grooming the mites
away from their bodies, or removing the infested brood.
However, colonies still suffer from a number of diseases and pests.

\subsection{The vector: the mite \textit{Varroa destructor}}

A mature \textit{Varroa} mite is in one of its two life stages,
the phoretic one, when it thrives attacking a bee and sucking its hemolymph.
To this end, it pierces the cuticle of the bee with its specialized mouth-parts.
As the mite can switch host during this phase, it may spread viruses in the colony
by sucking them from one host and then injecting them into a healthy individual, \cite{api}.

The honey bee mortality induced by the subtraction of hemolymph and the tearing
of tissues in the act of sucking is very insignificant.
Therefore, the damage to bee colonies derives from the parasitic action of the mite but,
above all, from its action as vector of many viral seriously harmful diseases.

When grooming, the bees bite the mites and the latter may fall off to the bottom of
the hive, constituting the ``natural mite drop'', which however represents
only a small part of the whole mite population.

In fact, the mites second life stage is the reproductive one, in which they are
hidden under the cell cappings, \cite{api}.
To lay eggs, an adult female mite moves from an adult bee into the cell of a larva.
When the brood cell is capped, the larva begins pupating and the mite starts to feed.
About three days later the mite lays its eggs: one is unfertilized and produces
a male, four to six are fertilized and give rise to females.
After hatching, the females feed on the pupa and mate with the male.
The surviving mature female mites remain attached
to the host bee when the latter becomes adult and leaves the cell.

\subsection{The viral pathogens}

Most of honey bee viruses commonly cause covert infections, i.e. the virus can
be detected at low titers within the honey bee population in the absence of obvious
symptoms in infected individuals or colonies.
However, when injected into the open circulatory system of the insects,
in which the hearth pumps blood into the space in between the organs rather than in
closed vessels as in vertebrates,
these diseases are extremely virulent: just a few viral particles per bee are sufficient
to cause the death within a few days.

The most serious problem caused by \textit{Varroa destructor} is therefore the transmission
of viral diseases to honey bees.
Scientists have found viruses in honey bees since decades, but in general they were
considered harmless. Only when about thirty years ago \textit{Varroa}
became a widespread problem, about twenty different viruses have been
associated with \textit{Varroa} mites, that are their physical or biological vectors,
\cite{api}.
Indeed,
when a virus-carrying mite during its phoretic phase attaches to a healthy bee,
it can transmit the virus to the bee, thereby infecting it.
Further, viral diseases are also transmitted among bees through food,
feces, from queen to egg, and from drone to queen.

On the other hand, a virus-free phoretic mite can become a physical virus carrier
horizontally from other infected mites, but also
when it attaches to an infected bee.
Therefore to control the viruses, the \textit{Varroa} population in a hive should be
kept in check, while symptoms of viral diseases usually indicate the presence of
\textit{Varroa} in the colony, \cite{api}. 

%The diagram in Figure \ref{trasmissione} describes all the possible
%transmission pathways of viruses in a honey bee colony infested by \textit{Varroa} mites.

Epidemiological surveys and laboratory experiments have demonstrated that especially
two viruses can successfully be transmitted between
honey bees during mite feeding activities:
deformed wing virus (DWV) and acute paralysis virus (ABPV). We briefly describe
them in the next Subsections.

%\begin{figure}[h!]
%\centering 
%{\includegraphics[scale=.45]{trasmissione.eps}}
%\caption{Diagram describing the different possible transmission routes for honey bee
%viruses. Adapted from de Miranda et al. (2011). CITAZIONE!!! NELLE REFERENZE!!!}
%\label{trasmissione}
%\end{figure}

\subsubsection*{Deformed wing virus (DWV)}

This syndrome in infected bees is rather benign, causing
covert infections with no visible symptoms, \cite{GA}.
But in the presence of \textit{Varroa destructor}, overt DWV infections are much more
common. The affected bees show deformed wings,
bloated and shortened abdomens, discoloration and are bound to die within three days from
the infection. The outcome could even be the whole colony destruction.
Further, DWV appears to replicate in \textit{Varroa}, making it also a biological
vector.

%\begin{figure}[h!]
%\centering 
%{\includegraphics[scale=.45]{dwv.eps}}
%\caption{Adult honey bee \textit{Apis mellifera} exhibiting deformed wings as
%clinical symptom of an overt DWV infection. A \textit{Varroa destructor} individual
%is still clinging to one of the legs. Picture taken by Michael Traynor. ?????
%DA DOVE? CITAZIONE!!!!}
%\label{dwv}
%\end{figure}

%Winter colony mortality is also strongly associated with DWV presence, irrespective
%of the \textit{Varroa} infestation. This suggests that \textit{Varroa} infection should
%be reduced in a colony far in advance of producing overwintering bees, to ensure
%reduction in DWV titers. QUESTO NON MI E' CHIARO!!!

\subsubsection*{Acute paralysis virus (ABPV)}

ABPV was present at low concentrations as a covert
infection in adult bees, but never caused paralysis outbreaks.
However, when \textit{Varroa destructor} established in the European
honey bee populations, the virus has been found in the bees worldwide,
and is now geographically distributed as the \textit{A. mellifera}, \cite{GA}.

Affected bees by this virus are unable to fly, lose the hair from their bodies and
tremble uncontrollably. The virus has been suggested to be a primary cause of bee
mortality. Infected pupae and adults suffer rapid death. 

\section{The model}\label{sec:model}

Let $B$ denote the healthy bees, $I$ the infected ones, $M$ the healthy mites,
$N$ the infected ones.
The model reads as follows:
\begin{eqnarray}\label{model0}
B'&=&b \frac{B}{B+I}-\lambda BN-\gamma BI-m B-aBM-qBN,
\\ \nonumber
I'&=&b \frac{I}{B+I} +\lambda BN+\gamma BI -(m + \mu) I-cIM-dIN ,
\\ \nonumber
M'&=&\tilde R(M+N,B+I)-\beta MI-\delta MN-eMB,
\\ \nonumber
N'&=&-n N- pN(N+M) + \beta MI+\delta MN-eNB.
\end{eqnarray}
The first equation expresses the dynamics of healthy bees.
The parameter $b$ represents the daily birth rate of the bees,
which we assume that are born healthy in proportion of
the fraction of healthy bees in the colony.
This takes into account the fact that
the disease is transmitted to larvae mainly by infected nurse bees
that contamine royal jelly.
Furthermore, the second term models the situation in which, if attacked by infected mites,
the healthy bees may contract the virus carried by the mite at transmission rate $\lambda$.

%TOGLIEREI LE 2 FRASI SEGUENTI:
%Among honey bees, viruses are both vertically and horizontally transmitted.
%Our model only considers the population of adult bees therefore vertical trasmission of viruses
%can be neglected.

Horizontal transmission among the bees occurs at rate $\gamma$ either via small wounds of
the exoskeleton, e.g. as a result of hair loss, or ingestion of faeces.
The fourth term in the equation describes the bees' natural mortality, at rate $m$,
while the last ones model the competition due to parasitism.

The second equation describes the evolution of infected bees.
As for the healthy bees, they reproduce at rate $b$,
but in proportion now to the fraction of infected bees in the population.
The second and third terms account for the new recruits is this class due to
horizontal contacts, respectively with infected mites and bees.
Infected bees are subject to both their natural $m$ and disease-related $\mu$ mortalities.
The remaining terms describe again competition.

The third and fourth equations model respectively the dynamics of healthy and infected mites.

The growth of sound mites is described by the $\tilde R$ function.
When attacking the infected bees they can, in turn, contract the virus
with transmission rate $\beta$.
Furthermore, healthy mites can acquire the virus also horizontally from other
infected mites at rate $\delta$. Note that
viral infection is not transmitted to their offsprings;
we therefore exclude vertical transmission among mites. 
The last term in both equations models the grooming behavior of healthy bees
at rate $e$. We exclude this behavior for
infected bees, as they are instead weakened by the virus to be able
to perform any resistance against \textit{Varroa} mites.

Note that in formulating the fourth equation we account for the fact that
the virus does not cause any harm to the infected \textit{Varroa} population,
but individuals remain infected until their natural death, that occurs at rate $n$,
hence there cannot be any disease recovery.
On the other hand, we consider the effect of intraspecific competition between
healthy and infected mites at rate $p$. The remaining terms once again describe
the new recruits in this class, from horizontal contacts with infected bees or mites.

Finally, we make the following simplifying assumptions:
\begin{itemize}
\item the daily bees birth rate $b$ is constant,
\item the honey bee mortality induced by the presence of \textit{Varroa} mite
can be neglected when compared with their natural and disease-related mortality,
so that we set $a=q=c=d=0$; 
\item healthy mites grow in a logistic fashion with reproduction rate $r$
and carrying capacity $K$, so that
$$
\tilde R(M+N,B+I)=r \left(1-\frac{M+N}{K}\right)(M+N).
$$
\end{itemize}
Therefore, system (\ref{model0}) reduces to the following one,
written in shorthand as
\begin{equation}\label{sistbound}
X'=f(X),
\end{equation}
where $X=(B,I,M,N)$ is the population vector and
the components of the right hand side $f$ are explicitly given by:
\begin{eqnarray}\label{model}
B'&=&b \frac{B}{B+I}-\lambda BN-\gamma BI-m B
\\ \nonumber 
I'&=&b \frac{I}{B+I} +\lambda BN+\gamma BI-(m+\mu)I 
\\ \nonumber 
M'&=&r \left(1-\frac{M+N}{K}\right)(M+N)-\beta MI-\delta MN-eMB
\\ \nonumber 
N'&=&-n N-p N (N+M)+\beta MI+\delta MN-eNB
\end{eqnarray}

%NON SO SE LASCIARLO\\
%We summarize here the meaning of the system's parameters, assumed all to be nonnegative:\\
%$r$: \textit{Varroa} growth rate;
%\\
%$K$: \textit{Varroa} carrying capacity;
%\\
%$b$: Bee daily birth rate;
%\\
%$\gamma$: Horizontal transmission rate of the virus among bees;
%\\
%$\delta$: Horizontal transmission rate of the virus among \textit{Varroa};
%\\
%$\lambda$: Mite-to-bee transmission rate of the virus;
%\\
%$\beta$: Bee-to-mite transmission rate of the virus;
%\\
%$m$: Bee natural mortality rate;
%\\
%$\mu$: Bee disease-related mortality rate;
%\\
%$n$: \textit{Varroa} natural mortality rate in the phoretic phase;
%\\
%$e$: Grooming rate of healthy bee;
%\\
%$p$: intraspecific competition coefficient among \textit{Varroa}.

\subsection{Well posedness and boundedness}

In this section we show that the system \eqref{model} is well-posed.
Basically, we follow the path of \cite{FCB}.

We need this because
the right-hand side of \eqref{model} is not well-defined at the points where $B+I=0$. 
But the next result shows that the solutions are always bounded away from this set.
Let us first define the subset $\mathcal{D}^0$
of $\mathbb{R}^4_+$ of the points that are away from the
singularity, namely
$$
\mathcal{D}^0=\left\{ X=(B,I,M,N) \in \mathbb{R}^4_+ : B+I \neq 0\right\}.
$$

\begin{thm}[Well-posedness and boundedness]
\label{teobound}
If $X_0 \in \mathcal{D}^0$, then there exists a unique solution of \eqref{sistbound}
defined on $[0,+\infty )$ such that $X(0)=X_0$. Moreover, for any $t>0$, $X(t) \in \mathcal{D}^0$,
and 
\begin{equation}
\label{bound1}
\frac{b}{\widetilde{m}} \leq \liminf\limits_{t\rightarrow +\infty} (B(t)+I(t))
\leq \limsup\limits_{t\rightarrow +\infty} (B(t)+I(t)) \leq \frac{b}{m} \\
\end{equation}
where $\widetilde{m} = m+\mu$. Further,
\begin{equation}
M(t)+N(t) \leq L , \quad \forall t \geq 0,
\label{bound2}
\end{equation}
where $L = \max \left\lbrace M(0)+N(0), K \right\rbrace$.
\end{thm}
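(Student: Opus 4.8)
The plan is to establish the three assertions in the natural order --- local existence and uniqueness, forward invariance of $\mathcal{D}^0$, and the a priori bounds --- after which global existence is automatic. Following \cite{FCB}, the only genuine difficulty is the singularity of $f$ on $\{B+I=0\}$, which is dealt with by showing that the trajectory stays uniformly away from it.

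First I would set up the local theory. On the open set $\mathcal{D}^0$ the vector field $f$ is $C^1$: the only terms that are not manifestly smooth, $bB/(B+I)$ and $bI/(B+I)$, are smooth wherever $B+I\neq0$. Hence $f$ is locally Lipschitz on $\mathcal{D}^0$, and Picard--Lindel\"of yields a unique solution through $X_0$ on a maximal interval $[0,T_{\max})$.

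Next, two scalar identities drive everything. Setting $P:=B+I$ and adding the first two equations, the $\lambda BN$ and $\gamma BI$ terms cancel, leaving $P'=b-mB-\widetilde{m}I$; since $0\le m\le\widetilde{m}$ this gives $b-\widetilde{m}P\le P'\le b-mP$ as long as $B,I\ge0$. Setting $S:=M+N$ and adding the last two equations, the $\beta MI$ and $\delta MN$ terms cancel, leaving $S'=r(1-S/K)S-eSB-nN-pNS\le r(1-S/K)S$ as long as $B,N\ge0$. For positivity I would then argue on $[0,T_{\max})$ face by face: $B'=B\,g(X)$ for a locally bounded $g$, so $B$ keeps its sign; on $\{I=0\}$ one has $I'=\lambda BN\ge0$; on $\{N=0\}$ one has $N'=\beta MI\ge0$; and on $\{M=0\}$ one has $M'=rN(1-N/K)$. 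Because positivity and the bounds are coupled, this is best phrased as a continuity (bootstrap) argument on the largest subinterval of $[0,T_{\max})$ on which all four components are nonnegative.

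The face $\{M=0\}$ is the delicate point and I expect it to be the main obstacle: $M'=rN(1-N/K)\ge0$ requires $N\le K$, which I would extract from the bound $S\le L$ obtained by comparing $S$ with the logistic equation $\dot w=r(1-w/K)w$ --- immediate when $L=K$, and in general from the fact that $S$ is strictly decreasing while $S>K$ and cannot cross the level $K$ upward once it has reached it; this is (\ref{bound2}). For (\ref{bound1}) I would compare $P$ with the linear equations $\dot u=b-mu$ and $\dot v=b-\widetilde{m}v$, whose solutions converge monotonically to $b/m$ and $b/\widetilde{m}$; the comparison principle then gives the stated $\liminf$ and $\limsup$ bounds and, in particular, $P(t)\ge\min\{P(0),b/\widetilde{m}\}>0$ for all $t$, so the trajectory never approaches $\{B+I=0\}$. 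Finally, these bounds confine $X(t)$ to a compact subset of $\mathcal{D}^0$ on which $f$ is bounded, which rules out finite-time blow-up; hence $T_{\max}=+\infty$ and $X(t)\in\mathcal{D}^0$ for every $t>0$.
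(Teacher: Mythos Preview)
Your approach is essentially the same as the paper's: sum the bee equations to sandwich $(B+I)'$ between two linear ODEs, sum the mite equations to bound $(M+N)'$ above by a logistic, and infer global existence from the resulting a priori bounds. You are in fact more careful than the paper about nonnegativity of the individual components --- the paper's proof does not discuss positivity face by face at all, and your identification of the $\{M=0\}$ face as the delicate point (where $M'=rN(1-N/K)$ can be negative if $N>K$) is a genuine subtlety that the paper simply glosses over.
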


\begin{proof}
The right-hand side of the system is globally Lipschitz continuous
on $\mathcal{D}^0$, so that existence and uniqueness of the solution of
system \eqref{sistbound} is ensured for every trajectory that is at a finite
distance of this boundary. 
We now show that (\ref{bound1}) and (\ref{bound2}) hold for all the
trajectories originating from a point where $B+I \neq 0$.
From the boundedness
of the variables, it will follow that all trajectories are defined on an infinite horizon
and they stay away from the set
$\left\lbrace (B,I,M,N)  \in \mathbb{R}^4_+ : B+I=0 \right\rbrace $. 

Summing the first two equations in \eqref{model}, for any point inside $\mathcal{D}^0$
we obtain
\begin{equation*}
B'+I'=b-mB-(m+\mu)I \geq b-\widetilde{m}(B+I).
\end{equation*}
Integrating this differential inequality between any two points $X(0)=X_0$ and $X(t)$
of a trajectory for which $X(\tau) \in \mathcal{D}^0$, $\tau \in [0,t]$, we get
\begin{equation}\label{bound1a}
B(t)+I(t) \geq \frac{b}{\widetilde{m}}(1-e^{-\widetilde{m}t})+(B(0)+I(0))e^{-\widetilde{m}t},
\end{equation} 
where the right-hand side is positive for any $t>0$.

Similarly, we have
\begin{equation*}
B'+I' \leq b-m(B+I),
\end{equation*}
and therefore
\begin{equation}
\label{bound1b}
B(t)+I(t) \leq \frac{b}{m}(1-e^{-mt})+(B(0)+I(0))e^{-mt}.
\end{equation}
From \eqref{bound1a} and \eqref{bound1b}, it is easy to see that the inequalities
in \eqref{bound1} hold for any portion of trajectory remaining inside $\mathcal{D}^0$.

We now consider the evolution of $M$ and $N$.
Defining the whole \textit{Varroa} mite population as $V = M+N$ and summing the
last two equations in \eqref{model} yield, for any point inside $\mathcal{D}^0$,
\begin{equation}
V'=r V -r\frac{V^2}{K}-eBV-nN-pNV \leq rV \left(1-\frac{V}{K} \right).
\end{equation}
The latter is a logistic term, thus the solution trajectory moves toward the stable
equilibrium point $V=K$. Hence, we get
$V(t) \leq \max \left\lbrace V(0), K \right\rbrace$ for all $t \geq 0$, proving \eqref{bound2}.

From the results on $B+I$ and $V$ the boundedness of all populations follows. 
\cite{FCB}
Thus all trajectories originatin in $\mathcal{D}^0$
remain in $\mathcal{D}^0$ for all $t>0$.
\end{proof}

Theorem \eqref{teobound} shows that the compact set $\mathcal{D}^1$,
defined as the largest subset of $\mathcal{D}^0$ satisfying
the inequalities of Theorem \eqref{teobound},
\begin{equation*}
\mathcal{D}^1 \doteq \left\lbrace (B,I,M,N) \in \mathbb{R}^4_+ : \frac{b}{\widetilde{m}}
\leq B+I \leq \frac{b}{m}, 0 \leq M+N \leq L \right\rbrace.
\end{equation*}
is a positively invariant attractor for all the system's trajectories,
indicating that the equilibria analysis shows the ultimate behavior of the system.

\section{Equilibria}\label{sec:equil}

Since the domain of definition of \eqref{model} is $\mathcal{D}^0$
we exclude solutions not included in it.

The admissible equilibria are then
$$
E_1=\left(\frac{b}{m}, 0, 0, 0\right), \quad
E_2=\left(0, \frac{b}{m+\mu}, 0, 0\right),
$$
which are always feasible,
$$
E_3=\left(\frac{b}{m}, 0, \frac{K}{r}\left(r-\frac{eb}{m}\right), 0\right)
$$
which is feasible for
\begin{equation}\label{E3feas}%era E2feas
r \ge \frac{eb}{m}.
\end{equation}
Then we find
\begin{equation}\label{E4}% era E3
E_4=\left(\frac{\mu^2 -b \gamma+m \mu}{\mu \gamma}, \frac{b \gamma - m \mu}{\mu \gamma},
0, 0\right),
\end{equation}
so that feasibility for $E_4$ is given by
\begin{equation}\label{E4feas}% era E3
0<b \gamma - m \mu <\mu^2.
\end{equation}

To find the equilibrium point with no healthy bees, we proceed as follows.
From the second equation of \eqref{model}, we find 
\begin{equation*}
I=\frac{b}{m+\mu}.
\end{equation*}
Substituting the value of $I$ and rearranging, the last two equations 
provide two conic sections: $\psi_1 (M,N)=0$ which explicitly is
\begin{equation}\label{psi1}
-\frac{r}{K}M^2-\left(\frac{2r}{K}+\delta \right)MN
-\frac{r}{K}N^2+\left(r-\frac{\beta b}{m+\mu}\right)M+r N=0
\end{equation}
and $\psi_2 (M,N)=0$, given by
\begin{equation}\label{psi2}
(-p+\delta)MN-pN^2+\frac{\beta b}{m+\mu}M-nN=0.
\end{equation}
Thus, the equilibrium follows from determining the intersection of these two conic
sections in the fist quadrant of the $M-N$ phase plane.

We begin by classifying them: let us focus on the first one.

Now \eqref{psi1} represents a non degenerate conic if and only if
\begin{equation}
\frac{\beta^2 b^2}{(m+\mu)^2}+\frac{\beta \delta b K}{m+\mu} \neq \delta Kr
\end{equation}
and under this assumption, computing the second invariant, we find that it has the value
\begin{equation*}
-\frac{4r \delta+\delta^2 K}{4K}<0,
\end{equation*}
for which $\psi_1$ is a hyperbola. Its center $C(x_c,y_c)$ is
\begin{align*}
x_c=\frac{r}{\delta (4r+\delta K )}\left( \delta K+\frac{2 \beta b}{m+\mu}\right)>0, \\
y_c=-\frac{1}{\delta (4r+\delta K)} \left(\frac{b \beta(2 r+\delta K )}{m+\mu}
-\delta Kr \right).
\end{align*}
The slopes of the asymptotes satisfy the quadratic equation
\begin{equation*}
-\frac{r}{K}m^2+\left(-\frac{2r}{K}-\delta \right)m-\frac{r}{K}=0,
\end{equation*}
which has the roots
\begin{align*}
m_{1,2}=
&\frac{-2r-K\delta \pm \sqrt{(2r+K \delta)^2-4 r^2}}{2r}<0.
\end{align*}

We also look for the points at which $\psi_1$ intersects respectively the $M$ and $N$ axes.
The hyperbola $\psi_1$ passes through the origin and crosses the vertical axis in its
at the positive height $N_1=K$ and the horizontal one at the abscissa
$$
M_1=\frac{K}{r}\left(r-\frac{\beta b}{m+ \mu} \right),
$$
which is positive if the inequality
\begin{equation}\label{psi1_r}
r>\frac{\beta b}{m+ \mu}
\end{equation}
holds.

To sum up, we plot the graph of $\psi_1$ in both cases.
The hyperbola $\psi_1$ has a branch in the positive quadrant joining
$(0,K)$ and $(M_1,0)$ if (\ref{psi1_r}) holds, Case 1,
and instead joining the origin with $(0,K)$ otherwise, Case 2.
\begin{figure}[h!]
\centering \textbf{Hyperbola $\psi_1$} \par \medskip
\subfigure
{\includegraphics[scale=.4]{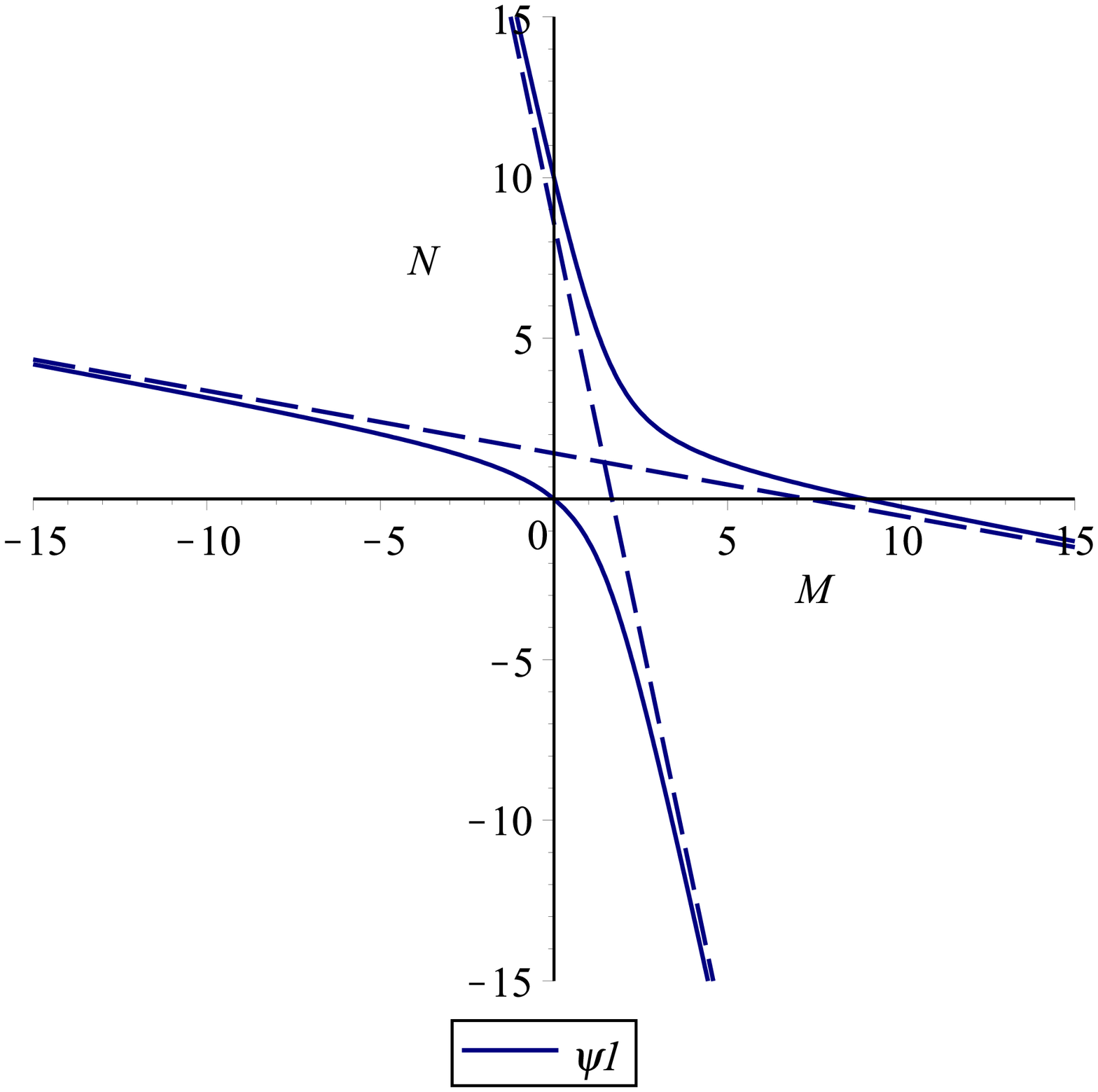}}
\caption{Case 1. $r>\frac{\beta b}{m+ \mu}$, for the parameters value $r=0.06, e=0.001, \lambda=0.03, b=250, K=10, \beta=0.0002, m=0.04, \mu=8, \delta=0.02, n=0.007, p=0.08$.}
\label{psi1a}
\hspace{3mm}
\subfigure 
{\includegraphics[scale=.4]{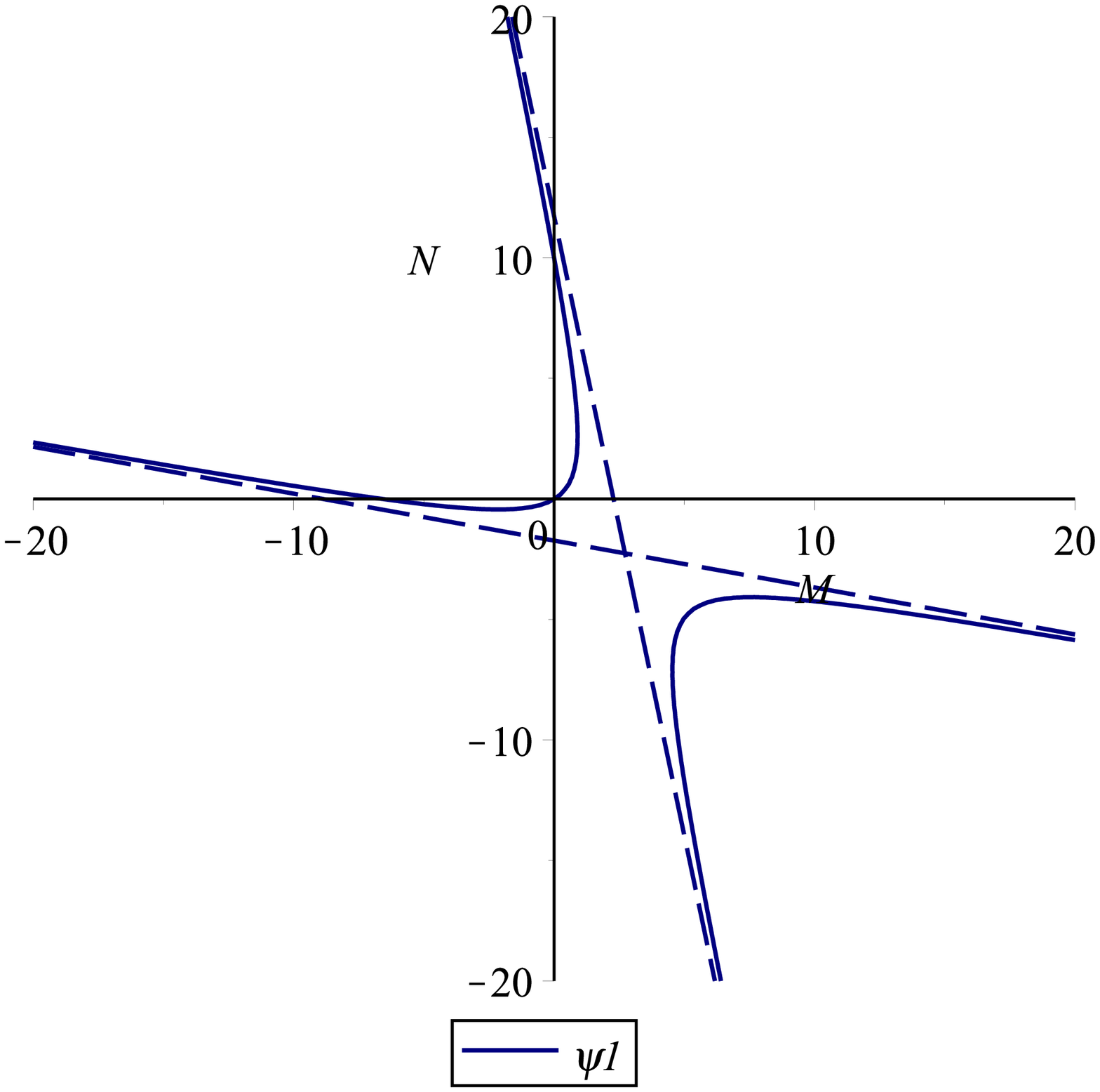}}
\caption{Case 2. $r<\frac{\beta b}{m+ \mu}$, for the parameters value $r=0.06, e=0.001, \lambda=0.03, b=1000, K=10, \beta=0.001, m=0.04, \mu=10, \delta=0.02, n=0.007, p=0.08$.}
\label{psi1b}
\end{figure}

\clearpage
 
\begin{oss}
Observe that the change in the two different shapes of the hyperbola $\psi_1$
occurs when the parameter value $r$ crosses the threshold value
\begin{equation}
r_{deg}=\frac{\beta^2 b^2}{\delta K(m+\mu)^2}+\frac{\beta b}{m+\mu},
\end{equation}
for which the hyperbola degenerates into its asymptotes.
However, this has no consequence on the analysis results.
\end{oss}

For the second conic section, the first invariant, assumed to
be nonvanishing, is
\begin{equation*}
\frac{(p-\delta)n\beta b}{4(m+\mu)}+\frac{p \beta^2 b^2}{4(m+ \mu)^2} \neq 0,
\end{equation*}
while the second one is always negative,
\begin{equation*}
-\frac{(-p+\delta)^2}{4}<0,
\end{equation*}
showing that also $\psi_2$ is a hyperbola.
Its center $C'$ is
\begin{align*}
&x_{c'}=\frac{1}{(p-\delta)^2}\left[ n(\delta-p)-\frac{2p \beta b}{m+ \mu} \right], \quad
&y_{c'}=\frac{\beta b}{(p- \delta)(m+ \mu)}.
\end{align*}
Here, the slopes of the asymptotes solve the quadratic equation
\begin{equation*}
(\delta -p)l-pl^2=0,
\end{equation*}
which provide immediately
\begin{equation*}
l_1=0, \quad l_2=\frac{\delta - p}{p},
\end{equation*}
so that $\psi_2$ has the horizontal asymptote
\begin{equation*}
N=\frac{\beta b}{(p- \delta)(m+ \mu)}.
\end{equation*}
Further, the hyperbola $\psi_2$ passes through the origin and crosses the
negative  vertical axis.
There are two possible cases, $p>\delta$ and $p<\delta$.
In the former, Case A, the feasible branch raises up monotonically
from the origin to the horizontal asymptote, in the latter, Case B, instead
it raises up again from the origin to the oblique asymptote.

%\newpage
\begin{figure}[h!]
\centering \textbf{Hyperbola $\psi_2$} \par \medskip
\subfigure
{\includegraphics[scale=.4, width = .5 \textwidth]{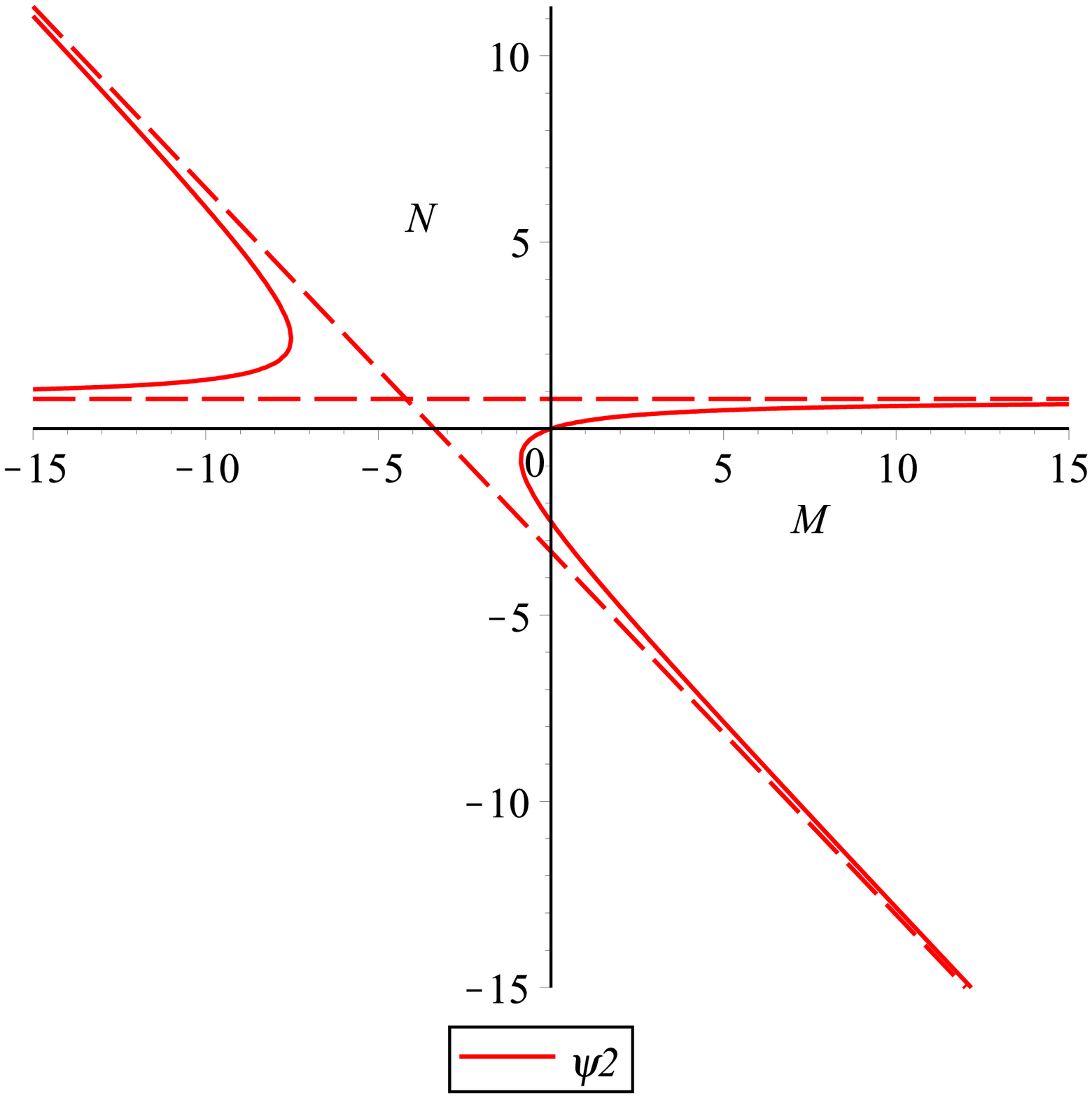}}
\caption{Case A. $p>\delta$, for the parameters value $r=0.06, e=0.001, \lambda=0.03, b=250, K=10, \beta=0.002, m=0.04, \mu=8, \delta=0.002, n=0.2, p=0.08.$.}
\label{psi2A}
\hspace{3mm}
\subfigure
{\includegraphics[scale=.4, width = .5 \textwidth]{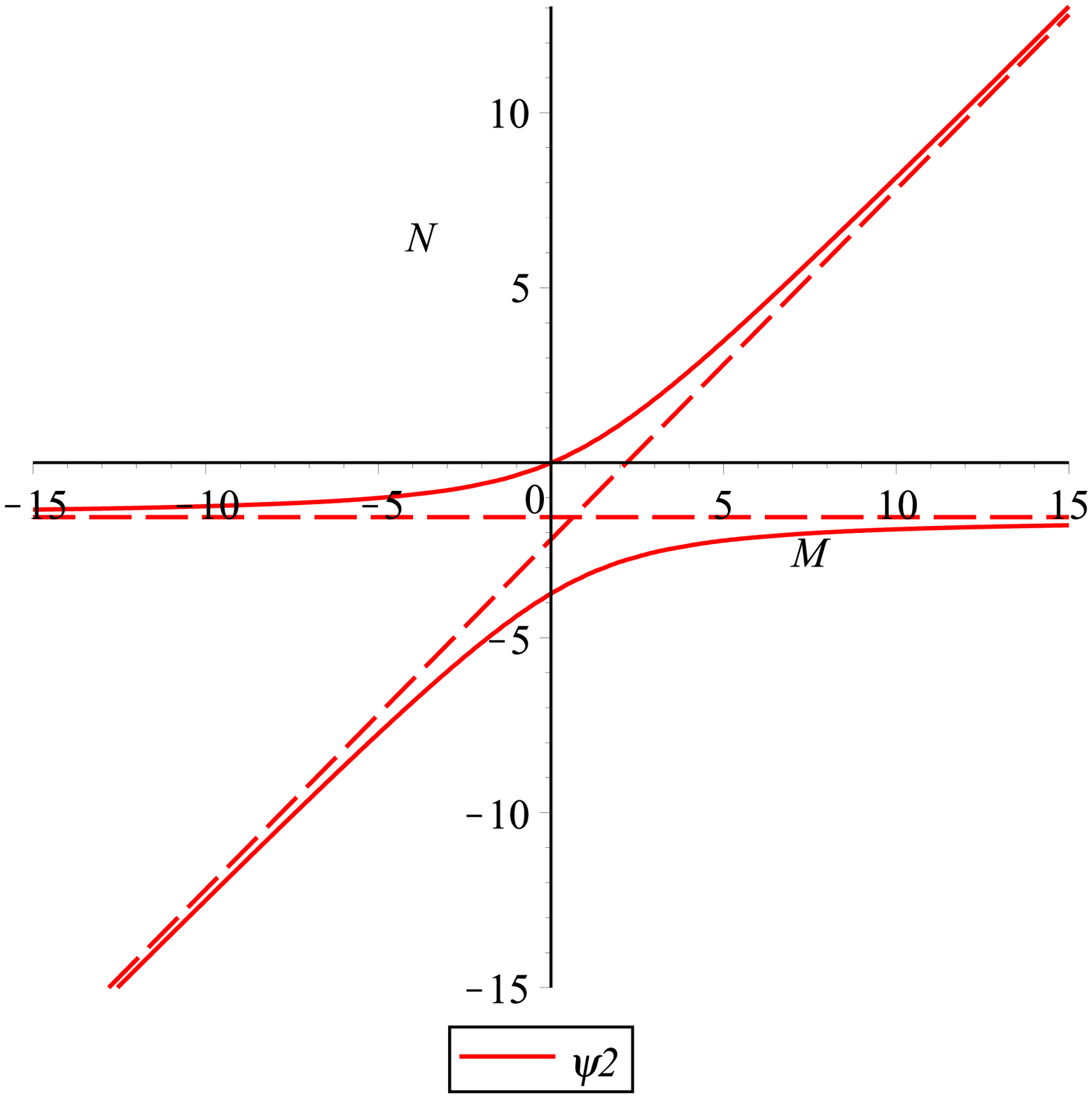}}
\caption{Case B. $p<\delta$, for the parameters value $r=0.06, e=0.001, \lambda=0.03, b=250, K=10, \beta=0.002, m=0.04, \mu=8, \delta=0.08, n=0.15, p=0.04$.}
\label{psi2B}
\end{figure}

\begin{oss}
The change occurs when the hyperbola degenerates into its asymptotes,
as the parameter value $\delta$ crosses the critical threshold value
\begin{equation}
\delta_{deg}=p+\frac{\beta b p}{n(m+\mu)}.
\end{equation}
But, just as for $\psi_1$, this change in shape does not affect the analysis results.
\end{oss}

Evidently, the hyperbolae $\psi_1$ and $\psi_2$ meet at the origin and they
further intersect at another point at least, which we require it to be located in
the first quadrant for feasibility.
The following cases may arise.

\begin{itemize}
\item[Case $1$A.] 

If $r>\frac{\beta b}{m+\mu}$ and $p>\delta$, there is a feasible intersection point.

\item[Case $1$B.]

If
$r>\frac{\beta b}{m+\mu}$ and $p<\delta$,
there is a feasible intersection point.

\item[Case $2$A.] 

If 
$r<\frac{\beta b}{m+\mu}$ and $p>\delta$,
two possible situations may occur,
the hyperbolae intersect at a feasible point if and only if the slope at the origin
of $\psi_2$ is larger than that of $\psi_1$.

In this case, the hyperbolae intersect at a feasible point if the slope at the origin
of $\psi_2$ is larger than that of $\psi_1$.% that is $\psi'_2(0)>\psi'_1(0)$.
The slopes of the two hyperbolae at the origin to be
\begin{equation}
N'_{\psi_1}(0)=-1+\frac{\beta b}{r(m+\mu)}, \quad
N'_{\psi_2}(0)=\frac{\beta b}{n(m+\mu)}.
\end{equation}
Hence, the condition that they must satisfy is $N'_{\psi_2}(0)>N'_{\psi_1}(0)$, which becomes
\begin{equation}\label{slopes}
\beta b(r-n)+rn(m+\mu)>0.
\end{equation}

\item[Case $2$B.] 

If 
$r<\frac{\beta b}{m+\mu}$ and $p<\delta$,
two possible situations may occur, again depending on the slopes at the origin of
the two conic sections, which give once again condition (\ref{slopes}).

\end{itemize}

In summary, we can conclude that the equilibrium point $E_5$ is feasible if 
(\ref{psi1_r}) holds, or otherwise, but then (\ref{slopes})
is satisfied.

The coexistence equilibrium will finally be analysed numerically.

\section{Stability}\label{sec:stab}

The Jacobian matrix for the system \eqref{model} at a generic point is given by
\begin{equation}\label{Jac}
J=
\left( \begin{array}{cccc}
J_{11} & -\frac{bB}{(B+I)^2}- \gamma B & 0 & -\lambda B  \\
-\frac{bI}{(B+I)^2}+ \lambda N + \gamma I &J_{22} & 0 & \lambda B   \\
-eM & - \beta M  & J_{33} &  J_{34}\\
-eN & \beta M  & J_{43}  & J_{44}   \\
\end{array}\right),
\end{equation}
where
\begin{align*}
&J_{11}= \frac{b}{B+I}-\frac{bB}{(B+I)^2}-\lambda N-m - \gamma I \\
&J_{22}=\frac{b}{B+I}-\frac{bI}{(B+I)^2}-m - \mu +\gamma B  \\
&J_{33}=-\frac{r(M+N)}{K}+r(1-\frac{M+N}{K}) - \beta I -eB-\delta N \\
&J_{34}=-\frac{r(M+N)}{K}+r(1-\frac{M+N}{K})-\delta M  \\
&J_{43}=\delta N -pN + \beta I  \\
&J_{44}=-n-eB+\delta M-p(M+N)-pN.
\end{align*}

We now turn to the stability analysis of each equilibrium. However,
in view of the difficulty of analytically assessing even its existence, as already stated above,
the stability coexistence equilibrium $E^*$ is investigated by means of simulations.

\subsection{Stability of $E_1$}
For the equilibrium point $E_1$
the eigenvalues are 
$$\Lambda_1 = -m, \quad
\Lambda_2 = \frac{b \gamma - m \mu}{m}, \quad
\Lambda_3 = \frac{mr-be}{m}, \quad
\Lambda_4 = -\frac{be+mn}{m}.
$$
Therefore, the equilibrium $E_1$ is stable if and only if 
\begin{equation}\label{E1stab}
\gamma <\frac m {b \mu },
\quad
r<\frac{be}{m}.
\end{equation}

\subsection{Stability of $E_2$}

Two of the eigenvalues of $J$ evaluated at $E_2$ are immediately found,
$$
\Lambda_1=\mu-\frac{\gamma b}{m+\mu}=\frac{m \mu +\mu^2 - \gamma b}{m+\mu}, \quad
\Lambda_2=-m-\mu.
$$
The other two eigenvalues are the roots of the equation 
\begin{equation*}
\Lambda^2 - \textrm{tr}\widehat J^*(E_2) \Lambda + \det \widehat J^*(E_2)=0,
\end{equation*}
where $\textrm{tr}\widehat J^*(E_2)$ and $\det \widehat J^*(E_2)$
are respectively the trace and the determinant of the submatrix 
\begin{align*}
\widehat J^*(E_2)=
\left( \begin{array}{cc}
r-\dfrac{b \beta}{m+ \mu} & r   \\
\dfrac{\beta b}{m+\mu}  & -n   \\
\end{array}\right),
\end{align*}
obtained from \eqref{Jac} deleting the first two rows and the first two columns.
The Routh-Hurwitz criterion ensures that the eigenvalues
have negative real part if and only if
\begin{equation}\label{trdetE1}
\textrm{tr}\widehat J^*(E_2)=r-\frac{b \beta}{m+ \mu}-n<0, \quad
\det \widehat J^*(E_2)=\frac{\beta b n}{m+\mu}-rn-\frac{r \beta b}{m+\mu}>0,
\end{equation}
which can be rewritten as
\begin{equation}\label{trdetE1_short}
(r-n)(m+\mu) < b \beta, \quad
b \beta(n-r) > rn(m+\mu).
\end{equation}
Clearly, we need $n>r$ for the second condition and this implies that the first one holds.
Furthermore, for stability also the first eigenvalue $\Lambda_1$ must be negative, so the following condition is also necessary, in addition to \eqref{trdetE1_short}.
Therefore, at $E_2$ stability occurs if and only if
\begin{equation}\label{E2stab}
b \gamma - m \mu -\mu^2>0,
\quad
b\beta> (m+\mu) \frac{rn}{n-r}, \quad n>r.
\end{equation} 

\subsection{Stability analysis for $E_3$}

Also for the equilibrium point $E_3$ two eigenvalues are explicit,
$\Lambda_1=-m$, $\Lambda_2=-r+\frac{be}{m}<0$ by the feasibility condition (\ref{E3feas}).
The other two eigenvalues are those of the remaining minor of the Jacobian $\widetilde J^*(E_3)$.
They are the roots of the quadratic 
\begin{equation*}
\Lambda^2 - {\textrm{tr}} \widetilde J^*(E_3) \Lambda + \det \widetilde J^*(E_3)=0.
\end{equation*}
Again, by the Routh-Hurwitz criterion, at $E_3$ stability is achieved for
\begin{eqnarray}\label{E3stab}
\frac{\gamma b}{m}+ \left( \frac{\delta K}{r}
-\frac{pK}{r} \right) \left( r- \frac{be}{m} \right) < \mu +n +\frac{be}{m}
\qquad \qquad \\ \nonumber
\left(\frac{\gamma b}{m} -\mu \right) \left[ \left( \frac{\delta K}{r}-\frac{pK}{r}\right)
\left(r-\frac{be}{m}\right) -n-\frac{be}{m}\right] 
> \frac{\lambda b}{m} \frac{bK}{r} \left(r- \frac{be}{m} \right) .
\end{eqnarray}  

\subsection{Stability analysis for $E_4$}

The characteristic equation at $E_4$ factorizes into the product of two quadratic equations.
The first is
\begin{equation}
\label{first}
\Lambda^2 - {\textrm{tr}} J^*(E_4) \Lambda + \det J^*(E_4)=0,
\end{equation}
where $J^*(E_4)$ is the submatrix 
\begin{align*}
J^*(E_4)=
\left( \begin{array}{cc}
\dfrac{b \gamma (b \gamma -m \mu -\mu^2)}{\mu^3} &
\dfrac{(b \gamma -m\mu-\mu^2)(b\gamma+\mu^2)}{\mu^3}   \\
-\dfrac{(b\gamma-m\mu)(b\gamma-\mu^2)}{\mu^3} &
- \dfrac{b \gamma (b \gamma -m \mu)}{\mu^3}  \\
\end{array}\right).
\end{align*}

Thus, the Routh Hurwitz conditions give the first set of inequalities needed for stability,
the first one of which holds unconditionally:
\begin{align}\label{tracciaE3}
{\textrm{tr}} J^*(E_4)=&\frac{b \gamma (b \gamma -m \mu -\mu^2)}{\mu^3}- \frac{b \gamma (b \gamma -m \mu)}{\mu^3}=-\frac{b\gamma}{\mu} <0, \\
\label{detE3}
\begin{split}
\det J^*(E_4)=&\left( \frac{b \gamma (b \gamma -m \mu -\mu^2)}{\mu^3}\right)
\left( -\frac{b \gamma (b \gamma -m \mu)}{\mu^3}\right) \\
&+\frac{(b \gamma -m\mu-\mu^2)(b\gamma+\mu^2)(b\gamma-m\mu)(b\gamma-\mu^2)}{\mu^6}>0.
\end{split}
\end{align}
Dividing \eqref{detE3} by $(b \gamma -m\mu-\mu^2)(b\gamma-m\mu)$, which is negative from
the feasibility condition (\ref{E3feas}), the inequality is reduced to
\begin{equation*}
-\frac{1}{\mu^2}<0,
\end{equation*}
so that it always holds.
The second quadratic is
\begin{equation}\label{second}
\Lambda^2 - {\textrm{tr}}J_*(E_4) \Lambda + \det J_*(E_4)=0,
\end{equation}
where
\begin{align}
J_*(E_4)=
\left( \begin{array}{cc}
r-\dfrac{\beta(b\gamma-m\mu)+e(-b\gamma+m\mu+\mu^2)}{\mu\gamma} & r   \\
\dfrac{\beta(b\gamma-m\mu)}{\mu\gamma}  & \dfrac{e(b\gamma-m\mu-\mu^2)}{\mu\gamma}-n    \\
\end{array}\right).
\label{J_*E3}
\end{align}
Once again, the associated Routh-Hurwitz conditions
\begin{align*}
\begin{split}
{\textrm{tr}}J_*(E_4)
&=r-\frac{\beta(b\gamma-m\mu)}{\mu\gamma}+\frac{2e(b\gamma-m\mu-\mu^2)}{\mu \gamma}-n <0, \end{split}\\
\begin{split}
\det J_*(E_4)=&\left(r-\frac{\beta(b\gamma-m\mu)+e(-b\gamma+m\mu+\mu^2)}{\mu\gamma}\right)\left(\frac{e(b\gamma-m\mu-\mu^2)}{\mu\gamma}-n\right) \\
&-\frac{r\beta(b\gamma-m\mu)}{\mu\gamma}>0,
\end{split}
\end{align*}
upon rearranging, provide the following stability conditions:
\begin{eqnarray}\label{E4stab}
r+\frac{2e(b\gamma-m\mu-\mu^2)}{\mu \gamma} < n+\frac{\beta(b\gamma-m\mu)}{\mu\gamma},\\
\nonumber
\left(r-\frac{\beta(b\gamma-m\mu)+e(-b\gamma+m\mu+\mu^2)}{\mu\gamma}\right)
\left(\frac{e(b\gamma-m\mu-\mu^2)}{\mu\gamma}-n\right) \\ \nonumber
> \frac{r\beta(b\gamma-m\mu)}{\mu\gamma}.
\end{eqnarray}

\subsection{Stability analysis for $E_5$}

For the equilibrium point $E_5$ the analytic expression of the mite populations
$M_5$ and $N_5$ is not know.
Anyway, we can evaluate the Jacobian matrix \eqref{Jac} for $B=0$ and $I=\frac{b}{m+\mu}$.
The characteristic equations factorizes to provide immediately two eigenvalues, namely
$$
\Lambda_1=\mu-\lambda N-\frac{\gamma b}{m+\mu}, \quad
\Lambda_2=-m-\mu,
$$
yielding the first stability condition 
\begin{equation}\label{E5stab1}
N>\frac{1}{\lambda}\left(\mu-\frac{\gamma b}{m+\mu} \right).
\end{equation}

The other two eigenvalues are the roots of 
\begin{equation*}
\Lambda^2 - {\textrm{tr}}\overline J^*(E_5) \Lambda + \det \overline J^*(E_5)=0
\end{equation*}
where $\overline J^*(E_5)$ is the submatrix 
\begin{align*}
\overline J^*(E_5)=
\left( \begin{array}{cc}
\overline J_{33} & r-\frac{2r}{K}(M+N)-\delta M    \\
\delta N-pN+\frac{\beta b}{m+\mu}  & \overline J_{44} \\
\end{array}\right),
\end{align*}
obtained from \eqref{Jac} deleting the first two rows and columns and
$$
\overline J_{33}=r-\frac{2r}{K}(M+N)-\frac{\beta b}{m+\mu}-\delta N, \quad
\overline J_{44}=-n+(\delta -p)M-2pN.
$$
The Routh-Hurwitz criterion provide the remaining stability conditions
\begin{eqnarray}\label{E5stab2}
M\left( -\frac{2r}{K}+\delta -p \right)+N \left(-\frac{2r}{K}-\delta -2p \right)+r
< n + \frac{\beta b}{m+\mu}, \\ \nonumber
\left[ r-\frac{2r}{K}(M+N)-\frac{\beta b}{m+\mu}-\delta N\right] \left[-n+(\delta -p)M-2pN \right] \\ \nonumber
> \left[r-\frac{2r}{K}(M+N)-\delta N \right] \left[(\delta -p)N+\frac{\beta b}{m+\mu} \right].
\end{eqnarray}

\section{Numerical simulations and interpretation}\label{sec:sim}
In this section, after describing the set of parameter values used,
we carry out numerical simulations using
the built-in ordinary differential equations solver ode45.

\subsection{The model parameters}

We take some model parameters from the literature, \cite{joyce},
\cite{parametri}, \cite{maci} and \cite{api2}. 

Taking the day as the time unit,
the birth rate of worker honey bees is taken as $b=2500$,
while their natural mortality rate is $m=0.04$, equivalent to a life expectancy of
$25$ days, \cite{joyce}.

The literature does not provide a precise value for the grooming behavior,
but from  \cite{joyce} a range of reasonable values of $e$ lies between $10^{-6}$ and $10^{-5}$.

During the bee season, i.e. spring and summer, the mite population doubles every month,
so we fix $r \approx \ln 2 \div 30$, that is $r=0.02$, \cite{api2}.

Their carrying capacity is taken as $K=15000$ mites, \cite{parametri},
while their natural mortality rate in the phoretic phase is estimated to be
$n=0.007$ during the bee season, \cite{maci}.   

Table \eqref{table3} summarizes these values, that are used in the numerical simulations.

\begin{table}[h!]\footnotesize
\begin{center}
\begin{tabular}{|c|c|c|c|c|}
\hline
\textbf{Parameter}     &\textbf{Interpretation}        &\textbf{Value}  &\textbf{Unit}  &\textbf{Source}  \\ \hline
\textbf{$b$}            &Bee daily birth rate          &$2500$       &$bees \times day^{-1}$  &\cite{joyce} \\   \hline 
\textbf{$e$}            &Grooming rate of healthy bee     &$10^{-6} - 10^{-5}$ &$day^{-1}$   &\cite{joyce}      \\   \hline 
\textbf{$m$}        &Bee natural mortality rate    &$0.04$ &$day^{-1}$ &\cite{joyce}    \\   \hline
\textbf{$K$}          &\textit{Varroa} carrying capacity     &$15000$ &$mites$ &\cite{parametri}    \\   \hline
\textbf{$r$}         &\textit{Varroa} growth rate    &$0.02$ &$day^{-1}$ &\cite{api2}    \\   \hline
\textbf{$n$}       &\textit{Varroa} natural mortality &$0.007$ &$day^{-1}$ &\cite{maci} \\ 
&rate in the phoretic phase & & & \\ \hline
\end{tabular}
\caption{Model parameters}
\label{table3}
\end{center}
\end{table}

\subsection{Discussion}

The initial conditions for the colony are given from field data.
All the bees are healthy since the infected,
having a lower longevity, cannot survive the winter.
Further, the acaricide treatments against \textit{Varroa} performed
in the autumn and winter theoretically allow to eradicate the infestation,
so that the mite population usually does not exceed $10$ units at the beginning of the bee season.

\begin{table}[h!]
\begin{center}
\begin{tabular}{|c|c|}
\hline
\textbf{Population}     &\textbf{Value}   \\ \hline
Sound bees            &$15000$     \\   \hline 
Infected bees           &$0$    \\   \hline 
Mites        &$\leq 10$      \\   \hline
\end{tabular}
\caption{Initial condition for simulations}
\label{table4}
\end{center}
\end{table}

For the parameters $b, e, m, K, r, n$ we use the reference values
from the real situation described in the previous section, see Table \eqref{table3}.
The remaining parameter values are arbitrarily chosen to simulate a hypothetical hive.

\subsection{Equilibrium $E_1$}

At the equilibrium $E_1$ only the healthy bees survive.
It represents the best situation for the hive. 
Healthy and infected mites are wiped out while the healthy bee population
increases to a level of $62500$ units.

The equilibrium $E_1$ is stably attained for the following choice of parameters:
$r=0.02$, $K=15000$, $b=2500$, $m=0.04$, $n=0.007$, $e=0.00001$, $\mu=8$,
$\lambda=0.3$; $\delta=0.0001$, $\beta=0.1$, $\gamma=0.0001$, $p=0.04$.
Initial conditions $B=15000$, $I=0$, $M=3$, $N=2$.

The second stability condition \eqref{E1stab} is always satisfied by the field data.
Infected bees become extinct in view of the remaining stability condition \eqref{E1stab},
that, rewritten in the form $\gamma< m \mu b^{-1}$,
establishes an upper bound on the horizontal transmission rate of
the virus among bees. 

\subsection{Equilibrium $E_2$}
It is interesting to note that
the equilibrium point $E_2$, in which the disease affects the whole bee colony
but the mites disappear, is unstable when the field data are used,
since the third stability condition \eqref{E2stab}
is not satisfied. Indeed note that $n-r=0.007-0.02<0$.

\subsection{Equilibrium $E_3$}
Also 
the disease-free equilibrium point $E_3$, in which healthy mites are present,
is infeasible in the field conditions, since the feasibility condition \eqref{E3feas}
is not satisfied.

This result is well substantiated by the empirical beekeepers observations
in natural honey bee colonies.
Namely, the viral infection occurs whenever the mite population is present.

\subsection{Equilibrium $E_4$}
At the mite-free equilibrium point $E_4$
the disease remains endemic among the bees.

Rearranging \eqref{E4stab}, we can see that in this situation the opposite condition of equilibrium $E_1$ must be verified, namely
\begin{equation}\label{E4stab_new}
\frac{m \mu}{b}< \gamma < \frac{\mu^2 + m \mu}{b}
\end{equation}
imposing a lower and an upper bound on
the disease horizontal transmission rate among bees.  
It is indeed reasonable to expect that also the population of infected bees survives,
as opposed to the equilibrium $E_1$, with a high enough contact rate.

If the horizontal transmission rate becomes too large though,
exceeding its upper bound, the point $E_4$ becomes unstable,
but the first stability condition for $E_2$, \eqref{E1stab}, holds.
Apparently we could obtain the equilibrium $E_2$ in which the infection affects
all the honey bees, but we already know that $E_2$ is always unstable from the field data
and therefore the infected bees in such situation do not thrive alone in the hive.

The equilibrium $E_4$ is obtained for the parameters values: $r=0.02$, $K=15000$, $b=2500$,
$m=0.04$, $n=0.007$, $e=0.00001$,$\mu=6$, $\lambda=0.01$; $\delta=0.0005$, $\beta=0.00006$,
$\gamma=0.0002$, $p=0.001$ and the initial conditions $B=15000$, $I=0$, $M=3$, $N=2$.
The eigenvalues of the Jacobian matrix for these parameter values are
$$
\Lambda_1 \approx -0.0416+0.5 i \quad
\Lambda_2 \approx -0.0416-0.5 i \quad
\Lambda_3 \approx -0.28 \quad
\Lambda_4 \approx -0.31.
$$
showing that there are damped oscillations.

\subsection{Equilibrium $E_5$}
The equilibrium $E_5$ represents the \textit{Varroa} invasion scenario.
It is always feasible by the field data, since \eqref{slopes}
is always satisfied, as its left hand side is $r-n=0.02-0.007>0$.
It can be obtained with the following choice for
the parameters: $r=0.02$, $K=15000$, $b=2500$, $m=0.04$, $n=0.007$, $e=0.00001$,$\mu=5$,
$\lambda=0.004$; $\delta=0.0005$, $\beta=0.0001$, $\gamma=0.3$, $p=0.0003$
and the initial conditions $B=15000$, $I=0$, $M=3$, $N=2$.

Also in this case, the viral infection remains endemic.
The effective virulence of the disease is increased by the presence of mites,
since the virus is directly vectored into the hemolymph of the honey bees.
As a consequence, the healthy bee population becomes extinct.

Since $E_2$ is unstable in view of the field data,
it seems much more likely that an epidemic could affect all the bees in a
\textit{Varroa}-infested colony than in a \textit{Varroa}-free colony.

In fact, the mite is a far more effective vector than nurse bees are.
Further, detailed studies show that most of honey bee viruses are present
at low levels but do not cause any apparent symptoms in \textit{Varroa}-free
colonies. Only important stressors, such as the introduction of
\emph{Varroa destructor}, can convert the silent infection into a symptomatic one,
\cite{GA}.

\subsection{The coexistence equilibrium $E^*$}

At coexistence the honey bees share their infected hive with mites.

Figures \eqref{sim_coes1sana} and \eqref{sim_coes2malata} show
two different levels of infection.
The spread of the virus throughout the colony depends on model parameters describing the
disease transmission, namely $\lambda$, $\gamma$, $\beta$, $\delta$. 
Reasonably, small values of these parameters lead to a rather higher prevalence of healthy
bees and mites with respect to the infected ones, Figure \eqref{sim_coes1sana}, while the
opposite situation is depicted in Figure \eqref{sim_coes2malata} in presence of higher
disease transmission rates. 
Furthermore, comparing the $\mu$-values in these simulations, we discover an interesting
and seemingly counter-intuitive phenomenon: a higher level of infection in the hive is
obtained for a lower value of the disease-related mortality and vice versa.
This can be interpreted by observing that a higher survival of infected bees makes
them longer available for the transmission of the virus and inevitably
leads to an increase of the viral titer in the colony.
Hence, when transmitted by \textit{Varroa} mites, the most virulent diseases
at the colony level become the least harmful for the single bees.
Thus the number of mites required for vectoring an epidemic spread
is higher for the most harmful viruses.
It has indeed been shown, for example, that the acute paralysis virus (APBV),
which is rapidly lethal to infected bees, requires a much larger mite population
for its outbreak of an epidemic than the mite population needed
for the deformed wing virus (DWV) outbreak,
which allows a greater survival of infected bees \cite{SM}. 

\begin{figure}[h!]
\centering 
{\includegraphics[scale=.7]{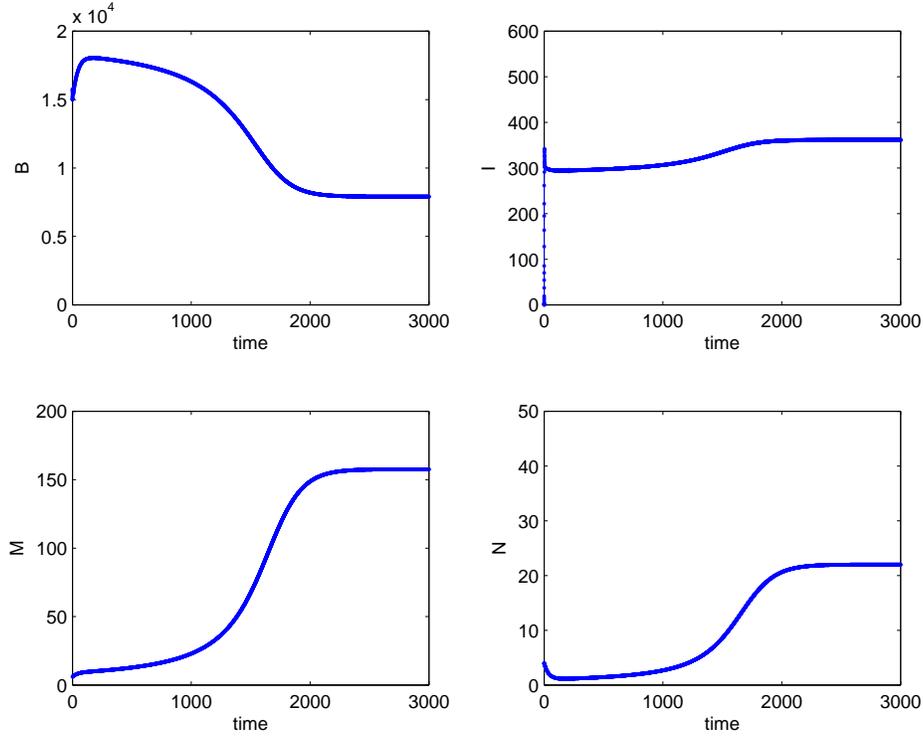}}
\caption{The coexistence equilibrium $E^*$, for the parameters values: $r=0.02$, $K=15000$,
$b=2500$, $m=0.04$, $n=0.007$, $e=0.000001$,$\mu=6$, $\lambda=0.007$; $\delta=0.0005$,
$\beta=0.00001$, $\gamma=0.0003$, $p=0.0005$. Initial conditions
$B=15000$, $I=0$, $M=6$, $N=4$.}
\label{sim_coes1sana}
\end{figure}

\begin{figure}[h!]
\centering 
{\includegraphics[scale=.7]{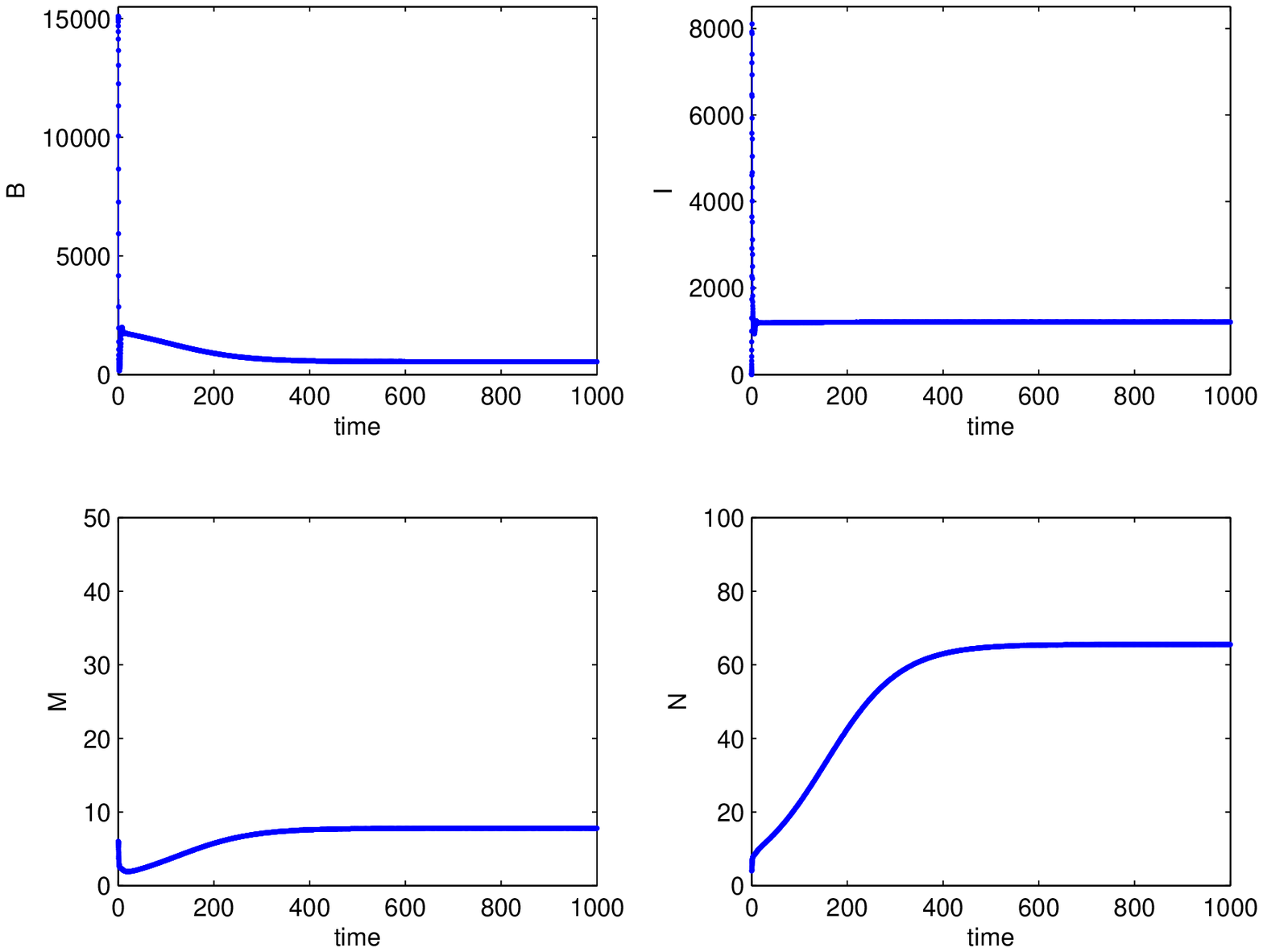}}
\caption{The coexistence equilibrium $E^*$, for the parameters values: $r=0.02$, $K=15000$,
$b=2500$, $m=0.04$, $n=0.007$, $e=0.000001$,$\mu=2$, $\lambda=0.01$; $\delta=0.001$,
$\beta=0.0001$, $\gamma=0.0006$, $p=0.0002$. Initial conditions $B=15000$, $I=0$, $M=6$,
$N=4$.}
\label{sim_coes2malata}
\end{figure}

Finally, in Figure \eqref{sim_coes_apidim_varrcr} we use a different initial condition from
the previous simulations to highlight the effect of \textit{Varroa} infestation on honey bee
viral infections: here the bee population drastically decreases as the mites increase. 
For a high value for the disease-related mortality of the bees ($\mu=7$), we observe a low
level of infection, with the infected bees not exceeding $350$,
although the decrease of healthy bees is quite dramatic.
    
\begin{figure}[h!]
\centering 
{\includegraphics[scale=.7]{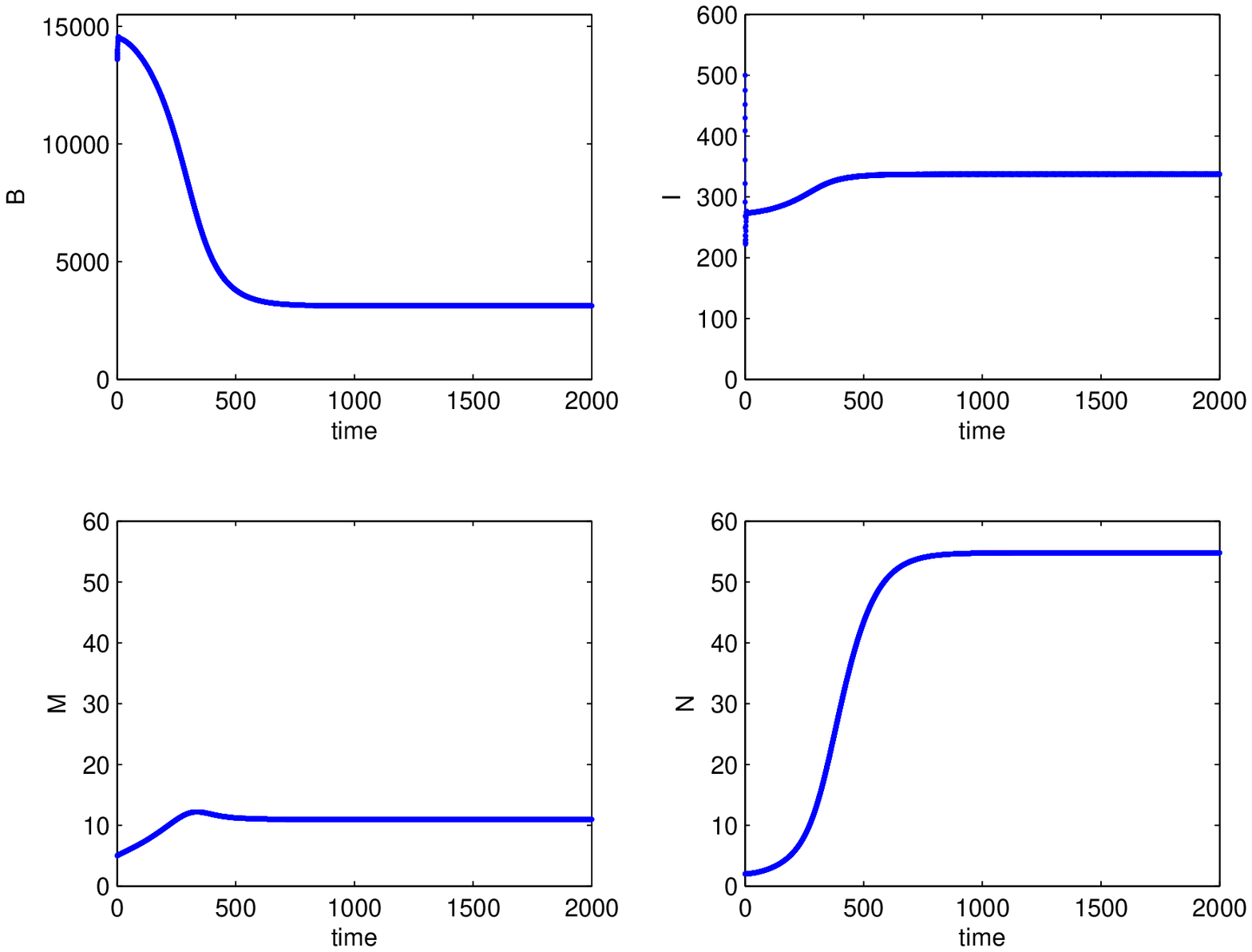}}
\caption{The coexistence equilibrium $E^*$, for the parameters values: $r=0.02$, $K=15000$,
$b=2500$, $m=0.04$, $n=0.007$, $e=0.000001$,$\mu=7$, $\lambda=0.01$; $\delta=0.002$,
$\beta=0.00002$, $\gamma=0.0004$, $p=0.0002$. Initial conditions $B=14000$, $I=500$, $M=5$,
$N=2$.}
\label{sim_coes_apidim_varrcr}
\end{figure}

\subsection{Summary}

We sum up the feasibility and stability conditions of the equilibrium points in Table
\ref{table2}.

\begin{table}[h!]
\begin{center}
\begin{tabular}{|c|c|c|c|}
\hline
\textbf{Equilibrium}     &\textbf{Feasibility}        &\textbf{Stability}  &
\textbf{In field} \\ &&& \textbf{conditions} \\ \hline
\textbf{$E_1$}            &always          &\eqref{E1stab} & allowed    \\   \hline 
\textbf{$E_2$}            &always          &\eqref{E2stab} & unstable     \\   \hline 
\textbf{$E_3$}         &\eqref{E3feas}      &\eqref{E3stab}& infeasible   \\   \hline
\textbf{$E_4$}         &\eqref{E4feas}    &\eqref{E4stab} &  allowed \\   \hline
\textbf{$E_5$}         &\eqref{psi1_r} or,
otherwise, \eqref{slopes}   &\eqref{E5stab1}, \eqref{E5stab2} &  allowed \\   \hline
\textbf{$E^*$}       & numerical & numerical & allowed \\ 
       & simulations & simulations & \\ \hline
\end{tabular}
\caption{Summarizing table of equilibria: feasibility and stability conditions}
\label{table2}
\end{center}
\end{table}

\section{Bifurcations}\label{sec:bif}

\subsection{Transcritical bifurcations}
Direct analytical considerations on the feasibility and stability conditions
are only possible between $E_1$ and $E_4$.
Some other transcritical bifurcations are found with the help of numerical simulations.

\begin{itemize}
\item $E_1$-$E_4$

Comparing \eqref{E1stab} and the left inequality in \eqref{E4feas},
we find that $E_4$ becomes feasible exactly when $E_1$ is unstable, and vice versa.
There is thus a transcritical bifurcation for which $E_4$ emanates from $E_1$ when
the bifurcation parameter $\gamma$ crosses from below the critical value
$\gamma^{\dagger}$ given by
\begin{equation}\label{gammatr}
\gamma^{\dagger} = \frac{m \mu}{b}.
\end{equation}
The simulation reported in Figure \eqref{transE0_E3} shows it explicitly.
For the chosen parameter values, $\gamma^{\dagger} = \frac{m \mu}{b}  \approx 0.000097$.

\begin{figure}[h!]
\centering 
{\includegraphics[scale=.7]{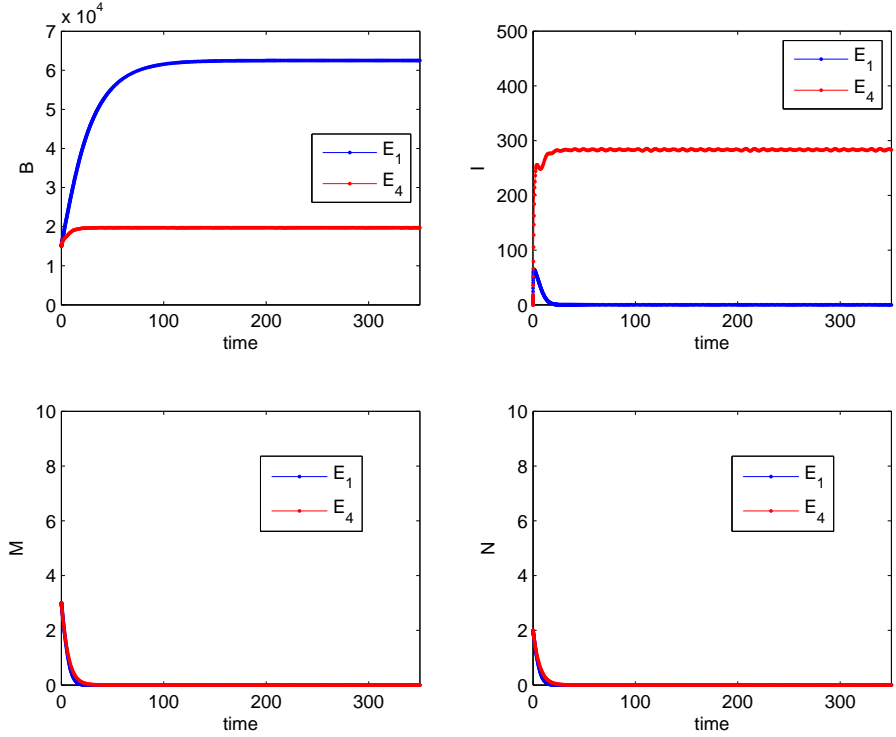}}
\caption{Transcritical bifurcation between $E_1$ and $E_4$, for the parameters values
$r=0.02$, $K=15000$, $b=2500$, $m=0.04$, $n=0.007$, $e=0.00001$,$\mu=6$, $\lambda=0.01$,
$\delta=0.0005$, $\beta=0.00006$, $\gamma=0.000094$ (blue), $\gamma=0.0003$ (red),
$p=0.001$. Initial conditions $B=15000$, $I=0$, $M=3$, $N=2$.}
\label{transE0_E3}
\end{figure}

\item $E_2$-$E_4$

Another transcritical bifurcation arises between $E_2$ and $E_4$, compare
the first \eqref{E2stab} and the right inequality in \eqref{E4feas}.
It occurs for the critical bifurcation parameter value
\begin{equation}\label{gammatr2}
\gamma^{\ddagger} = \mu \frac{m+\mu}{b}.
\end{equation}

\item $E_2$-$E_5$

Further, comparing the second \eqref{E2stab} and \eqref{slopes},
also between $E_2$ and $E_5$ we find anew a transcritical bifurcation.
In this case $E_5$ emanates from $E_2$ when the disease transmission rate between
infected bees and susceptible mites falls below the critical value
\begin{equation}\label{betatr}
\beta^{\dagger} = \frac{m+\mu}{b} \frac {rn}{n-r}.
\end{equation}

\item $E_4$-$E^*$

Numerically, a transition is seen to occur between the equilibrium $E_4$
and the coexistence point $E^*$. It is shown in Figure \eqref{transcE3-coes},
taking as bifurcation parameter $\mu$.

As we have observed in the previous section, a low value of the disease-related mortality
allows infected bees to survive longer. Instead, a higher value of $\mu$ causes a drastic
reduction of the infected bee population. As a consequence, although there is a harmful
epidemic, the bee population is not greatly affected at the colony level and mites are soon
groomed away by the healthy bees. 

\begin{figure}[h!]
\centering 
{\includegraphics[scale=.7]{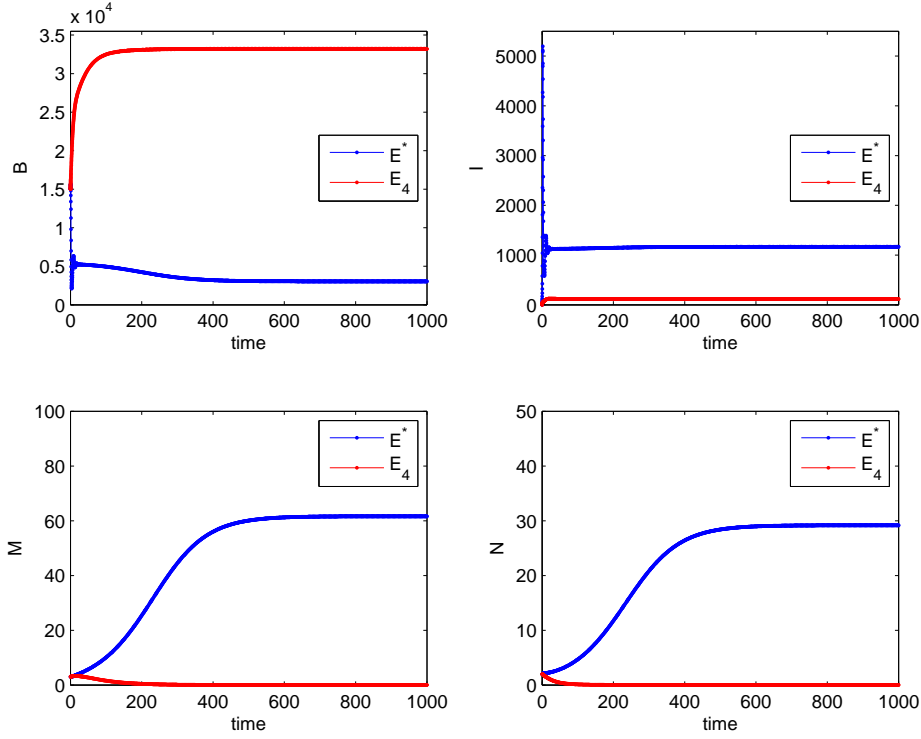}}
\caption{Transcritical bifurcation between $E_4$ and $E^*$, for the parameters values
$r=0.02$, $K=15000$, $b=2500$, $m=0.04$, $n=0.007$, $e=0.000001$,$\mu=2$ (blue), $\mu=10$
(red) $\lambda=0.007$, $\delta=0.0005$, $\beta=0.00001$, $\gamma=0.0003$, $p=0.0005$.
Initial conditions $B=15000$, $I=0$, $M=3$, $N=2$.} 
\label{transcE3-coes}
\end{figure}

\item $E_1$-$E_5$

In Figure \eqref{transcE0-E4}, another transcritical bifurcation is shown. For a small value
of $\gamma$ the system settles to the disease- and mite-free equilibrium $E_1$. For a
significantly higher value of $\gamma$, mites invade the environment, the disease becomes
endemic and no healthy bees survive, equilibrium $E_5$.

\begin{figure}[h!]
\centering 
{\includegraphics[scale=.7]{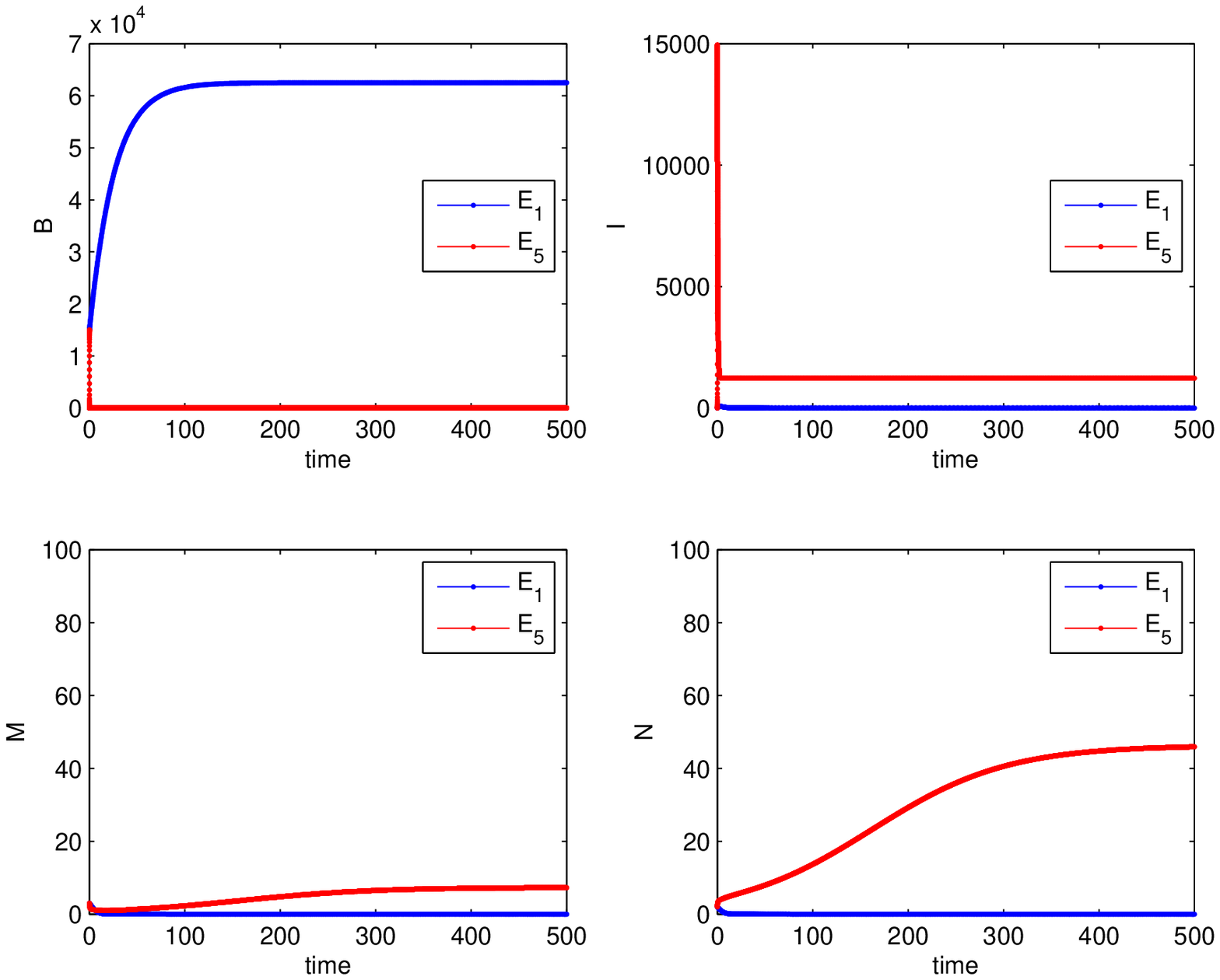}}
\caption{Transcritical bifurcation between $E_1$ and $E_5$, for the parameters values
$r=0.02$, $K=15000$, $b=2500$, $m=0.04$, $n=0.007$, $e=0.00001$,$\mu=2$, $\lambda=0.004$,
$\delta=0.0005$, $\beta=0.0001$, $\gamma=0.00002$ (blue), $\gamma=0.3$ (red), $p=0.0003$.
Initial conditions $B=15000$, $I=0$, $M=3$, $N=2$.}
\label{transcE0-E4}
\end{figure}

\item $E_5$-$E^*$

In Figure \eqref{transcE4-coes} we discover that a transition can occur also
between $E_5$ and $E^*$,
if the horizontal transmission rate of the virus among honey bees $\gamma$
is low enough. In such case indeed the healthy bees can thrive in the hive.

\begin{figure}[h!]
\centering 
{\includegraphics[scale=.7]{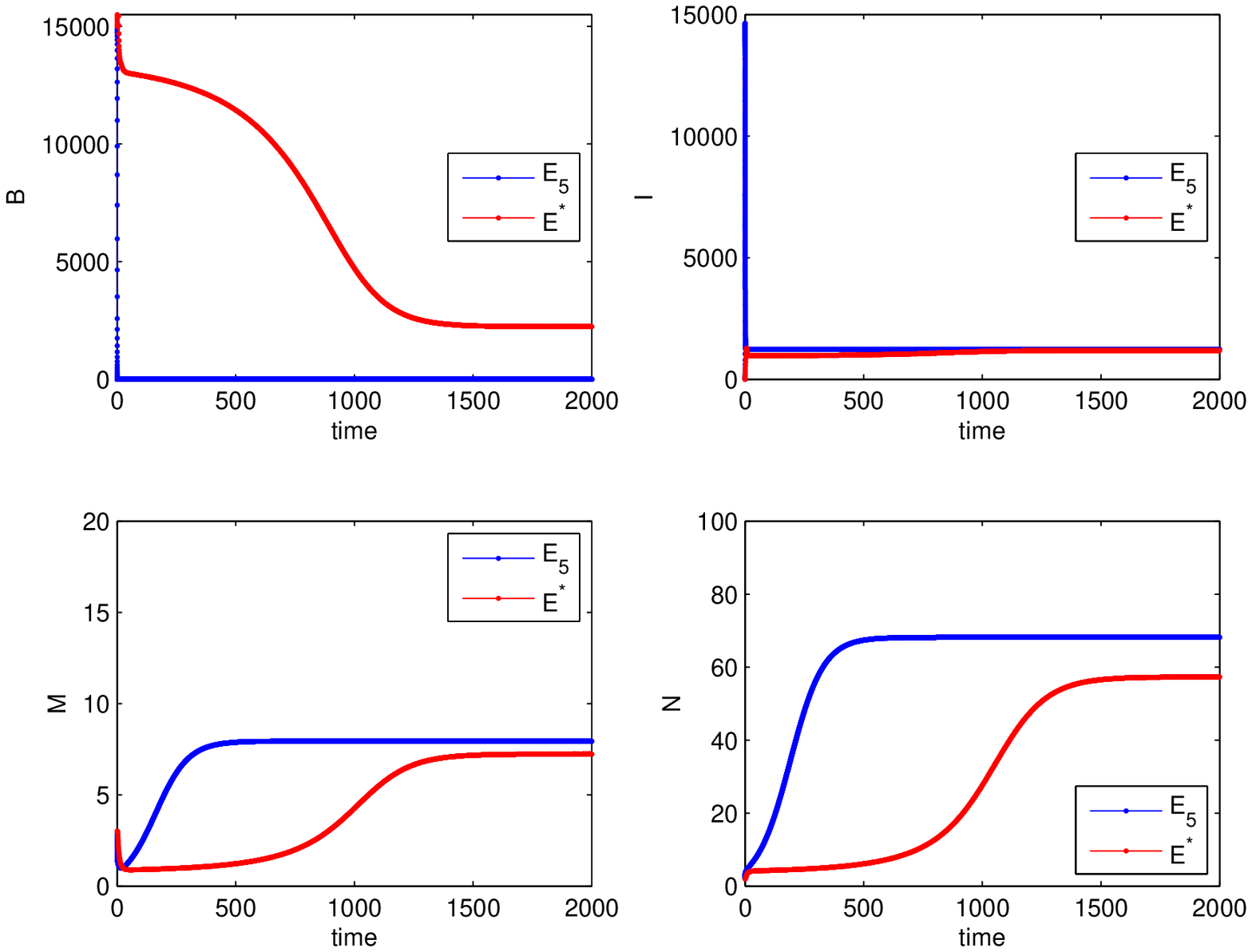}}
\caption{Transcritical bifurcation between $E_5$ and $E^*$, for the parameters values
$r=0.02$, $K=15000$, $b=2500$, $m=0.04$, $n=0.007$, $e=0.000001$, $\mu=2$, $\lambda=0.01$,
$\delta=0.001$, $\beta=0.0001$, $\gamma=0.03$ (blue), $\gamma=0.0001$ (red), $p=0.0002$.
Initial conditions $B=15000$, $I=0$, $M=3$, $N=2$.}
\label{transcE4-coes}
\end{figure}

\item $E_1$-$E^*$

Finally, in Figure \eqref{transcrE0-coes}, we can see that the system can change its
behavior going directly from $E_1$ to $E^*$, again by suitably tuning the parameter $\gamma$.

\begin{figure}[h!]
\centering 
{\includegraphics[scale=.7]{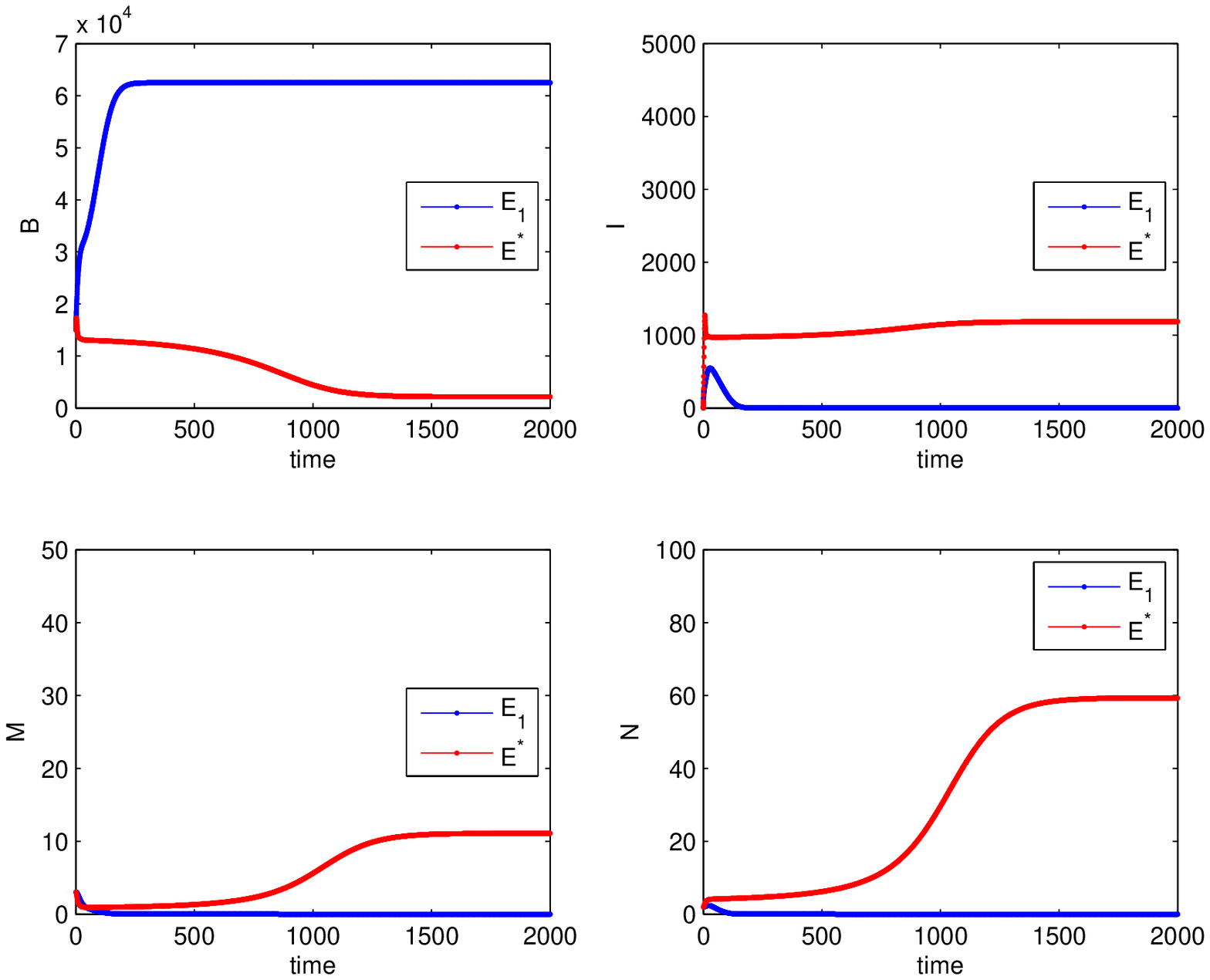}}
\caption{Transcritical bifurcation between $E_1$ and $E^*$, for the parameters values
$r=0.02$, $K=15000$, $b=2500$, $m=0.04$, $n=0.007$, $e=0.000001$, $\mu=2$, $\lambda=0.01$,
$\delta=0.0001$, $\beta=0.0001$, $\gamma=0.00002$ (blue), $\gamma=0.0001$ (red), $p=0.0002$.
Initial conditions $B=15000$, $I=0$, $M=3$, $N=2$.}
\label{transcrE0-coes}
\end{figure}

\end{itemize}

We also provide a bifurcation diagram for the four populations as a function of the
bifurcation parameter $\gamma$ in Figure \eqref{graficobello}. 
Starting from very low values of $\gamma$, we find at first the healthy-bees-only equilibrium
$E_1$. Note that only the healthy bee population $B$ thrives when
$\gamma <\gamma^{\dagger} \approx 5 \cdot 10^{-5}$.
Then in the range $\gamma^{\dagger} < \gamma < \gamma_1 \approx 1.4 \cdot 10^{-4}$
the disease becomes endemic, thus $I > 0$, while mites keep on being wiped out. The system is
found thus at equilibrium $E_4$. For values of $\gamma$ larger than $\gamma_1$ and up
to $\gamma_2 \approx 3 \cdot 10^{-3}$, the system quickly settles at the equilibrium $E^*$.
Here we have coexistence of bees and mites in an endemic hive, all the populations survive.
Past the threshold value $\gamma_2$
the healthy bee population vanishes. In this situation, the disease remains endemic
and affects all the bees.
This last transition shows numerically that we have another transcritical
bifurcation between the coexistence and the \textit{Varroa} invasion situations, namely
the equilibria $E^*$ and $E_5$. 

\begin{figure}[h!]
\centering 
{\includegraphics[scale=.7]{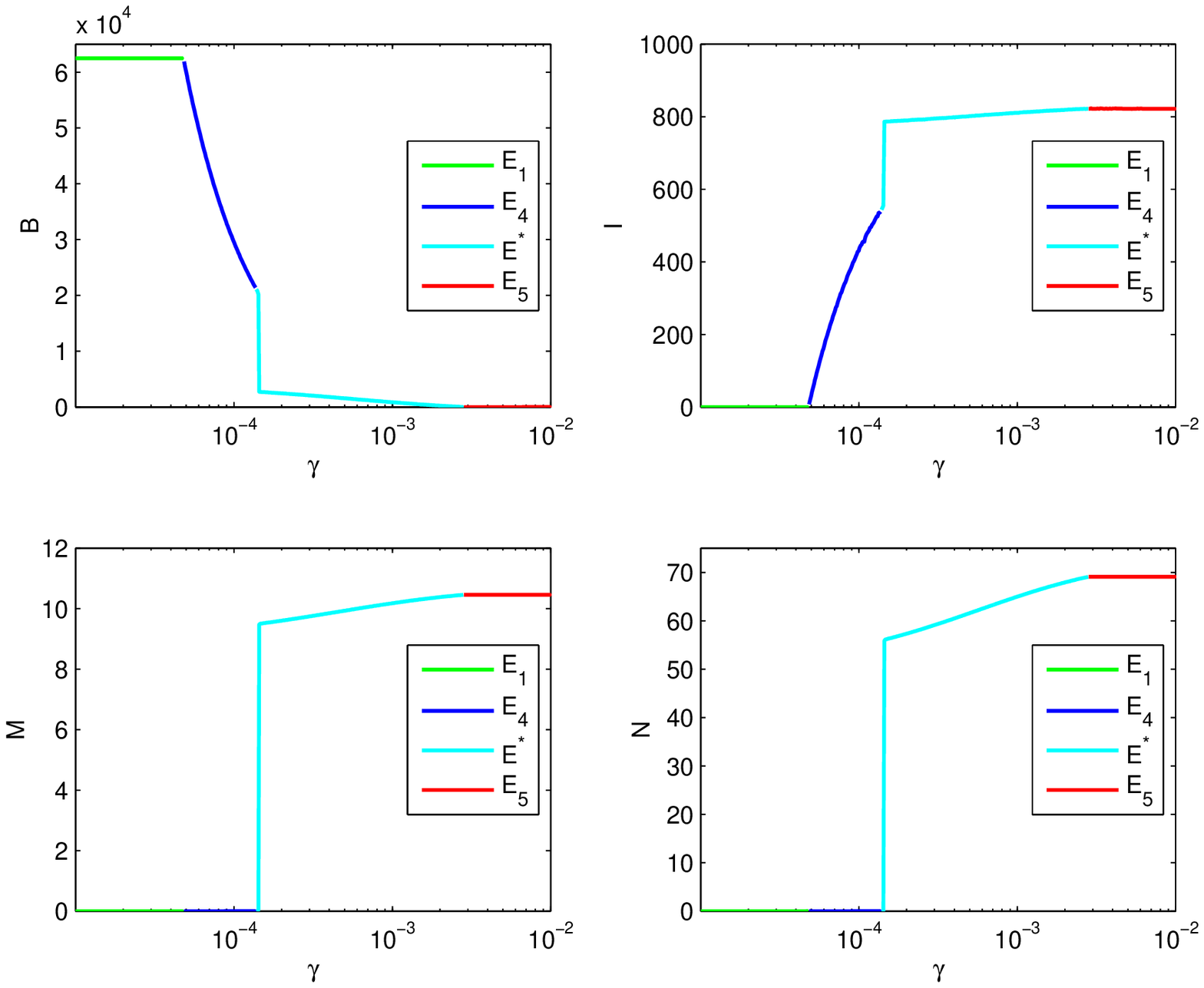}}
\caption{Bifurcation diagram, for the parameters values $r=0.02$, $K=15000$, $b=2500$,
$m=0.04$, $n=0.007$, $e=0.000001$, $\mu=3$, $\lambda=0.01$, $\delta=0.001$, $\beta=0.0001$,
$p=0.0002$. Initial conditions $B=15000$, $I=0$, $M=3$, $N=2$.}
\label{graficobello}
\end{figure}

\subsection{Hopf bifurcations}

We now try to establish whether there are special parameter combinations for which
sustained population oscillations are possible.
For this purpose, the eigenvalues must cross the imaginary axis.

This is easy to assess for a quadratic characteristic equation,
$\Lambda^2 + b\Lambda + c = 0$, since we need the linear term to vanish,
$b = 0$, and the constant term to be positive, $c > 0$.

Clearly at $E_1$ no Hopf bifurcation arises, since the eigenvalues are all real.

At $E_4$ the characteristic equation factors into the product of two quadratics.
The first one, namely the equation \eqref{first}, has a negative linear term so we exclude
the possibility for a Hopf bifurcation to occur. 

The second one has the form \eqref{second}.
To see whether Hopf bifurcation arise, we need the trace to vanish and the determinant
to be positive. Thus, imposing the condition on the trace we are led to
\begin{align}
&r-\frac{\beta(b\gamma-m\mu)}{\mu\gamma}+\frac{2e(b\gamma-m\mu-\mu^2)}{\mu \gamma}-n =0,
\label{1}
\end{align}
This condition however contradicts the second stability condition (\ref{E4stab}).
Indeed, note that solving for $n$ the equation \eqref{1}, we find
\begin{equation}
n=r-\frac{\beta(b\gamma-m\mu)}{\mu\gamma}+\frac{2e(b\gamma-m\mu-\mu^2)}{\mu \gamma},
\end{equation}
and the second inequality in \eqref{E4stab} becomes
$$
-\left(r-\frac{\beta(b\gamma-m\mu)+e(-b\gamma+m\mu+\mu^2)}
{\mu\gamma}\right)^2-\frac{r\beta(b\gamma-m\mu)}{\mu\gamma}>0,
$$
which is never satisfied, using the left inequality in the
feasibility condition \eqref{E4feas}.
We conclude that at $E_4$ no Hopf bifurcations can arise.

\section{Sensitivity analysis}\label{sec:sens}

In this section we perform the sensitivity analysis on \eqref{model} in order to rank the
parameters with respect to their influence on the system dynamics.

The first step is to determine the \textit{sensitivity equations} by differentiating the
original equations with respect to all the parameters. 
Our model consists of 4 equations with 12 parameters, therefore we get a sensitivity system
of 48 equations. 

In fact, \cite{comin}, when defining $\alpha$ the vector of the 12 parameters, the sensitivity system for
\eqref{sistbound} can be formulated as
\begin{equation}
\frac{d}{dt}\left( \frac{\partial X_i ( \alpha , t)}{\partial \alpha_j} \right)
= \sum_{s=1}^{4} \frac{\partial f_i ( X, \alpha,t) }{\partial X_s}
\frac{\partial X_s(\alpha,t)}{\partial \alpha_j}
+\frac{\partial f_i(X,\alpha, t)}{\partial \alpha_j}, 
\end{equation}   
for $i=1, \dots, 4$ and $j=1, \dots, 12$, with initial conditions
\begin{equation}
\frac{\partial X_i ( \alpha, 0)}{\partial \alpha_j}=0, \quad i=1, \dots, 4, \quad j=1, \dots, 12.
\end{equation}

Note that solving such system requires the knowledge of the state variables $X_i, i=1\dots,4$
and therefore it involves also the solution of the system \eqref{model}. Finally, we are led to solve
a system of $52$ differential equations. 
  
The rate of change of the state as a function of the change in the chosen parameter
is obtained from the value of the sensitivity solution evaluated at time $t$, \cite{BN}.

However, the parameters, and thus also the sensitivity solutions for different parameters, are
measured in different units. This means that without further action, any comparison is futile.
To make a valid comparison of the effect that parameters with different units have on the solution,
we simply multiply the sensitivity solution by the parameter under consideration. This form of the
sensitivity function is known as the \textit{semi-relative sensitivity solution}, \cite{BB}. It provides
information concerning the amount of change that the population will experience when the parameter
of interest is subject to a positive perturbation, \cite{BN}.

Computing the semi-relative solutions and making a comparison of the scales on the vertical axis of
these plots (formally, this is equivalent to ranking the parameters according to the $\infty -$norm),
we deduce that the parameters most affecting the system are $r, \mu $ and $\gamma$.

Specifically, an increase of the \textit{Varroa} growth rate has a negative effect on the bee
population and a positive influence on the mite population, consistently with Figure \eqref{sens_r}
(left frame). 
Furthermore, the effect of perturbing $r$ can be observed best on the population of healthy bees and
results in a reduction of $1200$ units (bees) after $210$ days.   

\begin{figure}[h!]
{\includegraphics[scale=.35]{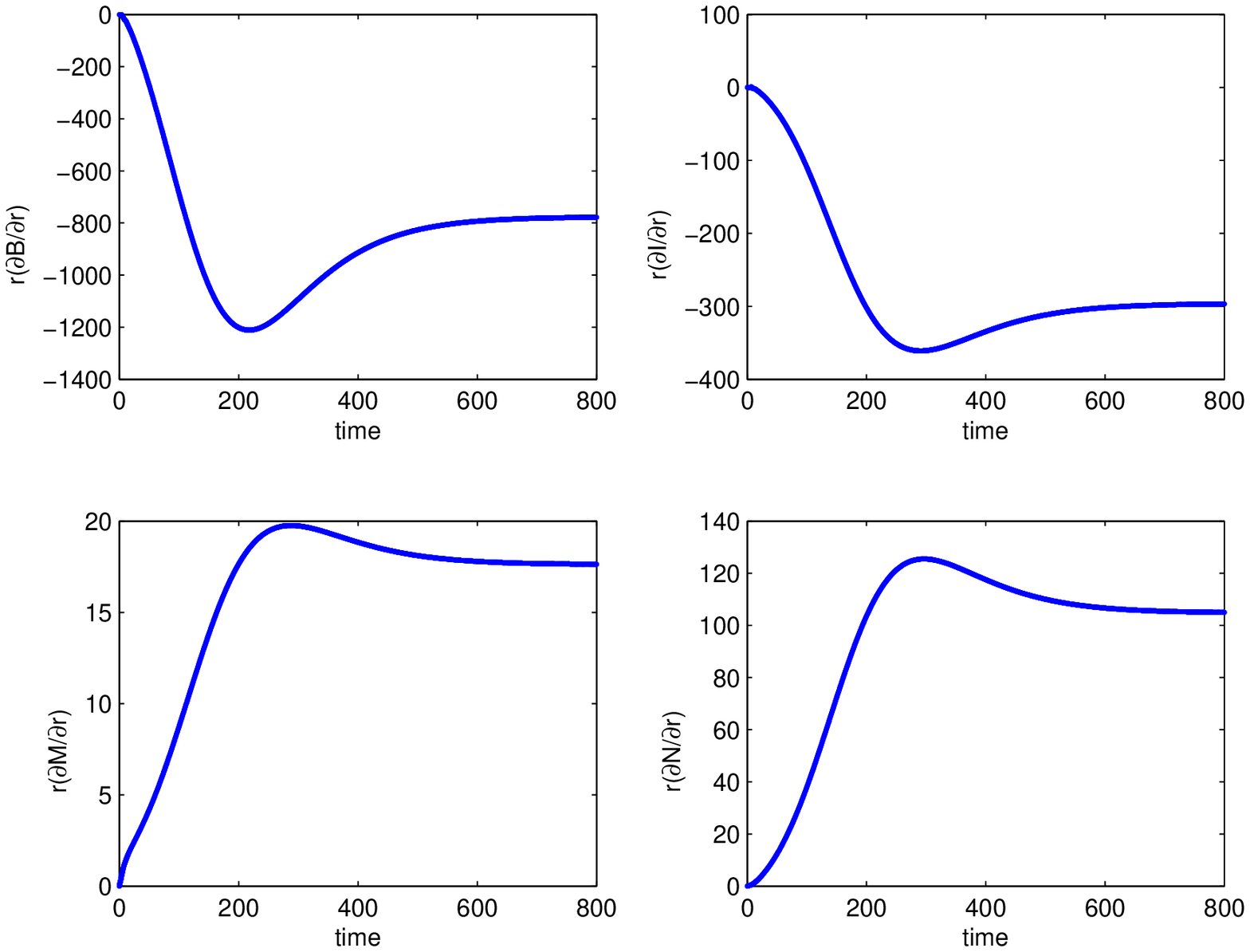} \hfill \includegraphics[scale=.35]{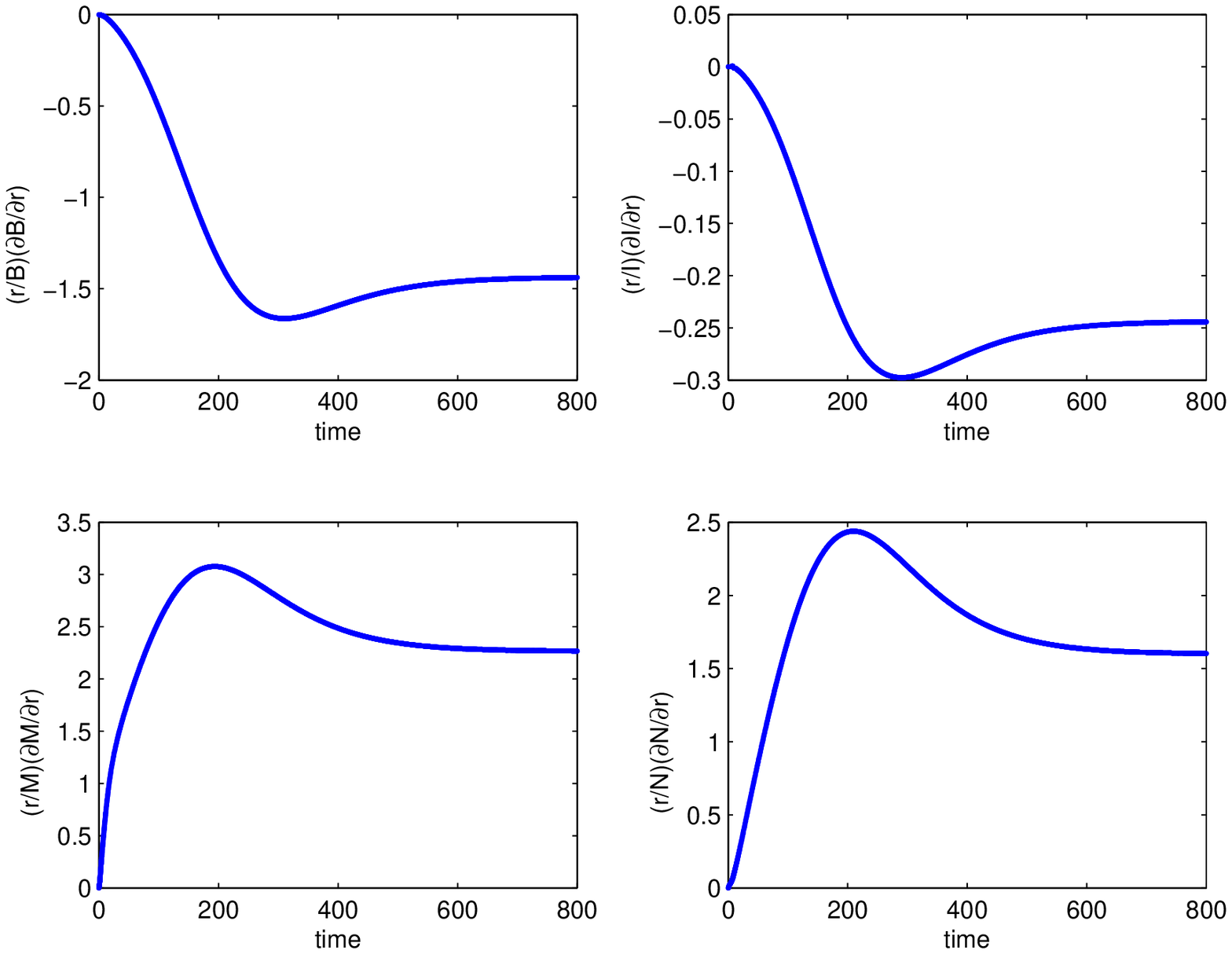}}
\caption{Semi-relative solutions (left) and the logarithmic sensitivity solutions (right) with
respect to the parameter $r$.}
\label{sens_r}
\end{figure}

From Figure \eqref{sens_r} (right frame), we can approximatively get the expected percentage
changes from an increase of $r$. In particular, \cite{BN}, we compute the logarithmic sensitivity solutions,
namely
\begin{equation}
\label{log_sens}
\frac{\partial \log (X_i (\alpha, t))}{\partial \log (\alpha_j)}=\frac{\alpha_j}{X_i(\alpha , t)}
\frac{\partial X_i(\alpha , t)}{\partial \alpha_j}, 
\end{equation} 
for $i=1, \dots, 4$ and $j=1, \dots, 12$.

Calculating \eqref{log_sens}, we find that a positive perturbation of $r$ leads to a decrease of
up to 170\% in the healthy bee population at $t=310$ days, a 30\% decrease in the infected bee
population at $t=290$ days, a 300\% increase in the healthy mite population at $t=200$ days, and
a 250\% increase in the infected mite population at $t=200$ days.

However, we also note that over 200\% changes, at time $200$ days, in the mite populations are not
significant due to their low level of concentration at time $200$ days in the solution. This is shown
in Figure \eqref{sim_coes2malata} for the set of parameter values used to perform sensitivity analysis.

We now turn to the parameter $\mu$, second in order of sensitivity. From Fig. \eqref{musens}, it is
clear that a higher value for the disease-related mortality of the bees has a positive impact on the
healthy bee population and a negative influence on the infected bee population.
In particular, at the end of the observation period we find an increase of $1000$ units (about 200\%)
in the healthy bee population  and a reduction of $600$ units (less than 50\%) in the infected bee population.

\begin{figure}[h!]
\centering
\includegraphics[scale=.5]{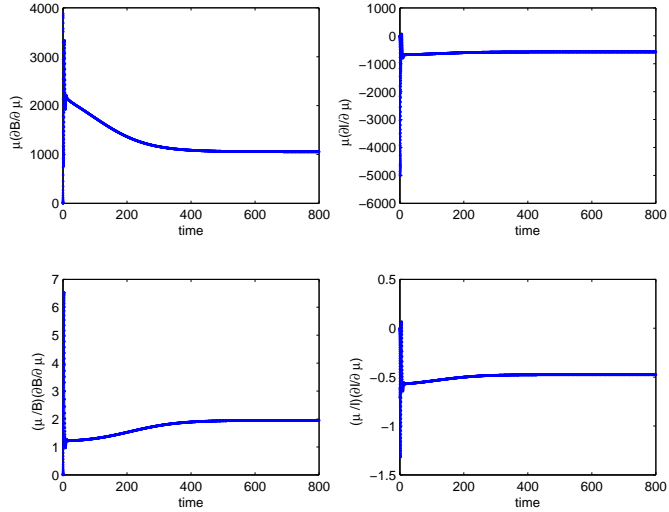} 
\caption{Semi-relative solutions (top) and logarithmic sensitivity solutions (bottom) for the populations
B (left) and I (right) with respect to the parameter $\mu$.}
\label{musens}
\end{figure}

According to our results, an increase in $\mu$ inevitably leads to a decrease in the infected bee
population. The shorter time of survival means that there is also less time for the virus to be
transmitted. This gives an explanation for the positive impact on the healthy bee population.
This is in line with our earlier observations: when transmitted by Varroa mites, the most virulent
diseases at the colony level are the least harmful ones to the single bees. 
 
Lastly we find the negative influence of $\gamma$ on the population of healthy bees with an
expected decrease of $600$ units and a percentage change of about 100\% at the end of the observation
period, as shown in Fig. \eqref{sens_gamma}.  

\begin{figure}[h!]
\centering
\includegraphics[scale=.35]{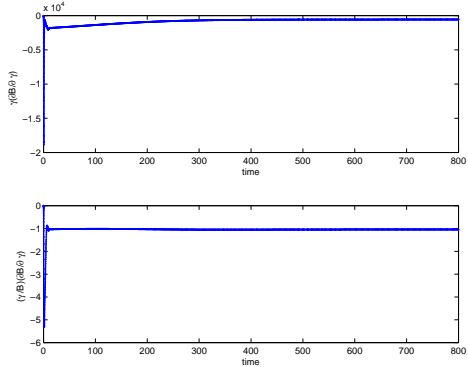} 
\caption{Semi-relative solution (top) and logarithmic sensitivity solution (bottom) for the population
B with respect to the parameter $\gamma$.}
\label{sens_gamma}
\end{figure}

To conclude the analysis, we look at Fig. \eqref{npbsens}.  Both a positive change in the \textit{Varroa}
natural mortality rate, $n$, as well as in their intraspecific competition coefficient, $p$, have a
positive effect on the healthy bee population and, unexpectedly, also on the infected bee population. 
This suggests that the horizontal transmission of the virus among honey bees would increase the
infected bee population also for higher values of $n$ and $p$, i.e. for a lower level of
\textit{Varroa} infestation.
The influence of the parameter $\gamma$ on the bee population also explains the frame on the
bottom: even a positive perturbation in the honey bee birth rate will yield a decrease in the
population of healthy bees, which soon leave this class to benefit the population of the infected ones. 
Furthermore, the latter is really sensitive with respect to $b$, with a dramatic increase of
about $1200$ units (100\%) as early as time $t=10$.

\begin{figure}[h!]
\centering
{\includegraphics[scale=.35]{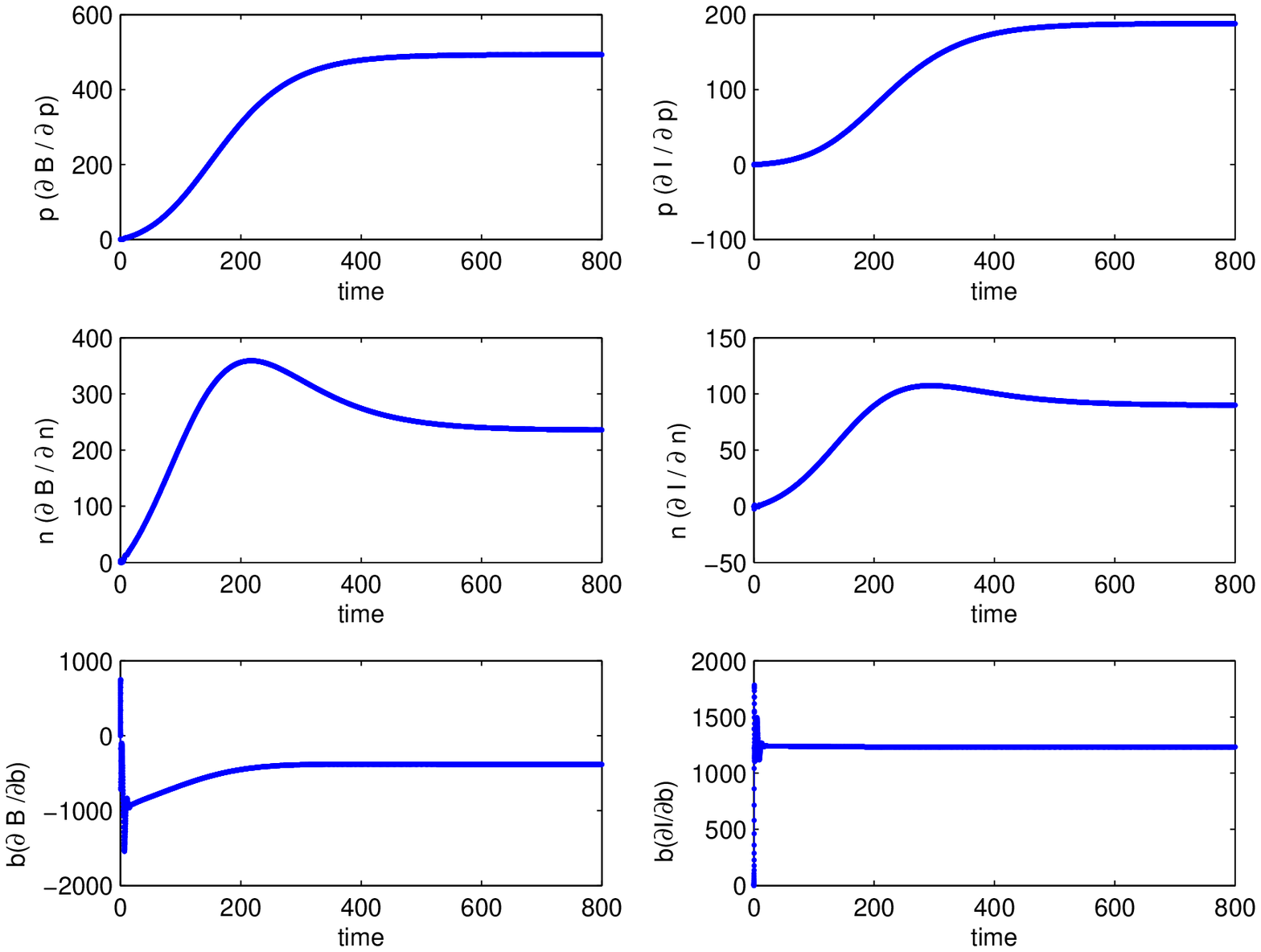}  \hfill \includegraphics[scale=.35]{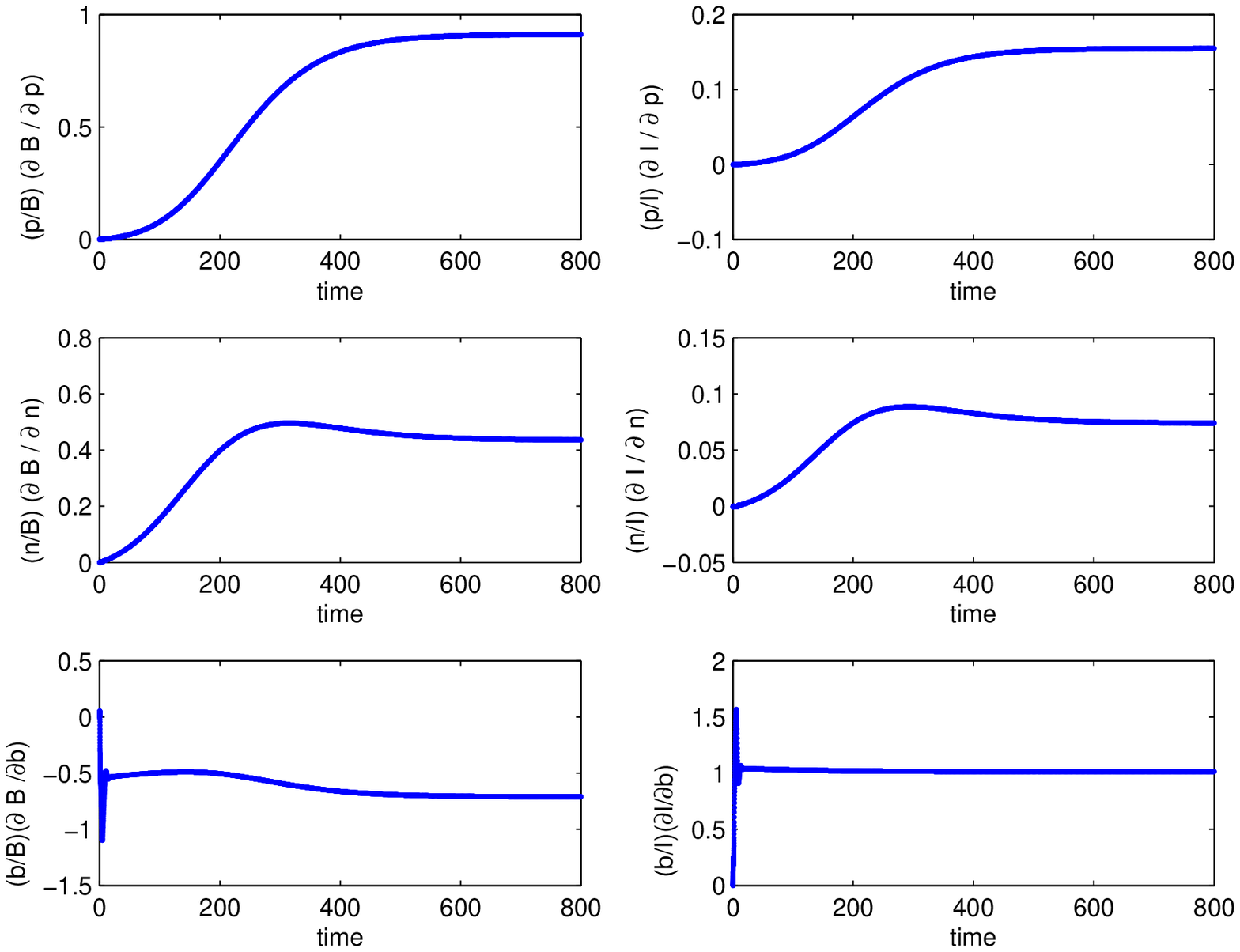}}
\caption{Semi-relative solutions (left) and logarithmic solutions (right) for the populations B and
I with respect to the parameters $p$, $n$ and $b$.}
\label{npbsens}
\end{figure}

\begin{oss}
The parameter of interest is usually perturbed by doubling or halving it, in order to understand how
sensitive the system dynamics are to such changes in it.
Therefore, for each parameter we perform the sensitivity analysis doubling and halving it to find out
that the effect of these perturbations is always the same and sensitivity solution plots differ only by
the scale on the temporal axis. In particular, doubling parameters yields to a shorter temporal scale,
while a longer one is obtained by halving them.      
\end{oss}

The knowledge of the most influential parameters allows us to understand on which one of them we could
act in order to drive the system to a safer situation.  
From a theoretical point of view, we can conclude that the population of healthy bees would benefit
from both a reduction in the \textit{Varroa} growth rate, as well as a reduction in the horizontal
transmission rate among bees.

\section{Conclusion}

Field and research evidence indicate that the long-term decline of managed honey beehives
in the western countries is associated with the presence of viruses in the
honey bee populations, \cite{GA}. While formerly the viruses produced only covert infections,
their combination with the invading parasite \textit{Varroa destructor}
has triggered the emergence of overt viral infections. These cause concern
because they entail fatal symptoms at both the individual bee level and at the colony
level, constituting a major global threat for apiculture.

In model presented here
we combine bee and mite population dynamics with viral epidemiology, to produce
results that match field observations well and give a clear explanation of how
\textit{Varroa} affects the epidemiology of certain naturally occurring bee viruses,
causing considerable damages to colonies.

In the model, there are only four possible stable equilibria, using the known field
parameters, see Table \ref{table2}. The first one contains only the thriving healthy bees.
Here the disease is not present and also the mites are wiped out. Alternatively,
there is another equilibrium where the mite population still disappears,
but the disease remains
endemic among the bee population that survives.
Thirdly, infected bees coexist with the mites in the \textit{Varroa} invasion scenario; in
this situation the disease invades the hive, affecting all the bees and driving the
healthy bees to extinction. The final coexistence equilibrium is also possible, with
both populations of bees and mites thriving and with an endemic
disease in the beehive among both species. Transcritical bifurcations relate these points
while, in spite of the four dimensionality of the system, which may render them more likely,
Hopf bifurcations have been shown not to arise.

Further, two alternatives turn out to be impossible by the use of field data for the
parameter values.

On the one hand, the endemic disease cannot affect all the bees
in a \textit{Varroa}-free colony.
According to the envisioned role of \textit{V. destructor} as disease vector
for increasing virulence of honey bee viruses, from the analysis
the disappearance of whole healthy bee population in a \textit{Varroa}-infested colony
seems much more likely than in a \textit{Varroa}-free colony.

On the other hand, the healthy mites cannot thrive only with healthy bees. Namely, if
the \textit{Varroa} population is present, then necessarily the bees viral infection occurs.

The findings of this study also indicate that a low horizontal virus transmission rate
among honey bees in beehives will help in protecting the bee colonies from the
\textit{Varroa} infestation and the viral epidemics. 

The model presented here gives good qualitative insight into the spread
of the bee disease. However, in its assumptions
certain effects that can be relevant are omitted.
Therefore, its primary value lies in the context of qualitative understanding
rather than that of a quantitative prediction.
For instance, the simulations use constant parameter values, but
it remains to be investigated whether the qualitative results would change
if continuously changing parameters instead were considered. 

%We also assumed that healthy mites grow according to the logistic model
%with constant carrying capacity $K$, but it would be interesting to consider
%a host-population-size-dependent carrying capacity,
%COSA VUOL DIRE? instead of treat the mite population strength as a given parameter.

Similarly, also the intraspecific competition coefficient among \textit{Varroa} $p$,
taken here as constant, could reasonably be considered as a bee-population-dependent
function, since mites essentially compete for the host hemolymph.

Nevertheless, despite this simplicity the model presented here might be a good starting
point for the development of other more sophisticated models including seasonality
and other aspects relevant for the bee colony health, or for example to
investigate the effectiveness of various \textit{Varroa} treatment strategies.

\end{document}